\theoremstyle{plain}
\newtheorem{theorem}{Theorem}[section]
\newtheorem{lemma}[theorem]{Lemma}
\newtheorem{cor}[theorem]{Corollary}
\theoremstyle{definition}
\numberwithin{equation}{section}
\newcommand{\tr}{\operatorname{Tr}}
\begin{document}

\title[A Penrose-Type Inequality] {A Penrose-Type Inequality with Angular Momentum and Charge for Axisymmetric Initial Data}

\author[Khuri]{Marcus Khuri}
\author[Sokolowsky]{Benjamin Sokolowsky}
\address{Department of Mathematics\\
Stony Brook University\\
Stony Brook, NY 11794, USA}
\email{khuri@math.sunysb.edu, bsokolowsky@math.sunysb.edu}

\author[Weinstein]{Gilbert Weinstein}
\address{Department of Physics and Department of Mathematics\\
Ariel University of Samaria\\
Ariel, 40700, Israel}
\email{gilbertw@ariel.ac.il}

\thanks{M. Khuri acknowledges the support of NSF Grant DMS-1708798.}

\begin{abstract}
A lower bound for the ADM mass is established in terms of angular momentum, charge, and horizon area in the context of maximal, axisymmetric initial data for the Einstein-Maxwell equations which satisfy the weak energy condition. If, on the horizon, the given data agree to a certain extent with the associated model Kerr-Newman data, then the inequality reduces to the conjectured Penrose inequality with angular momentum and charge. In addition, a rigidity statement is also proven whereby equality is achieved if and only if the data set arises from the canonical slice of a Kerr-Newman spacetime.
\end{abstract}
\maketitle

\section{Introduction}
\label{sec1}\setcounter{equation}{0}
\setcounter{section}{1}

Consider a simply connected, asymptotically flat initial data set $(M,g,k,E,B)$ for the Einstein-Maxwell equations. Here $M$ is a Riemannian 3-manifold with metric $g$, $k$ is a symmetric 2-tensor representing the second fundamental form of the embedding into spacetime, and $(E,B)$ represents the electromagnetic field. The non-electromagnetic matter energy and momentum densities are given by
\begin{equation}\label{1}
16\pi\mu_{em} = R+(\tr_{g}k)^{2}-|k|_{g}^{2}-2(|E|_g^2+|B|_g^2),\qquad
8\pi J_{em} = \operatorname{div}_{g}(k-(\tr_{g}k)g)+2E\times B,
\end{equation}
where $R$ is the scalar curvature and $E\times B$ represents cross product. It will be assumed that the weak energy condition $\mu_{em}\geq 0$ holds, the data are maximal $\tr_{g}k=0$, and that there is no charged matter
\begin{equation}\label{1.1}
\operatorname{div}_{g}E=\operatorname{div}_{g}B=0.
\end{equation}
In addition, the data are taken to be axisymmetric in that the isometry group of $(M,g)$ admits a subgroup isomorphic to $U(1)$, such that all other quantities defining the data are invariant under this $U(1)$ action. The Killing field generator will be denoted by $\eta$. Moreover, we will say that the initial data are asymptotically flat if there exists an end $M_{\text{end}}\subset M$ diffeomorphic to $\mathbb{R}^{3}\setminus\text{Ball}$, so that for some $\epsilon>0$ in the asymptotic coordinates
\begin{equation}\label{ilil}
g_{ij}=\delta_{ij}+O_{\ell}(r^{-\frac{1}{2}-\epsilon}),\text{ }\text{ }\text{ }\text{ }\partial g_{ij}\in
L^{2}(M_{\text{end}}),\text{
}\text{ }\text{ }
\text{ }k_{ij}=O_{\ell-1}(r^{-\lambda-\frac{1}{2}}),
\end{equation}
\begin{equation}
\mu_{em}, J_{em}^{i},J_{em}(\eta)\in L^{1}(M_{\text{end}}),\text{ }\text{ }\text{ }\text{ }\text{ }
E^{i}=O_{\ell-1}(r^{-\lambda}),\text{ }\text{ }\text{ }\text{ }\text{ }B^{i}=O_{\ell-1}(r^{-\lambda}),\text{
}\text{ }\text{ }
\text{ }\lambda>\frac{3}{2},\quad \ell\geq 5.
\end{equation}

Heuristic arguments originally due to Penrose \cite{Penrose} suggest the inequality
\begin{equation}\label{penrose}
m \geq \sqrt{\frac{\mathcal{A}}{16\pi}+\frac{Q^2}{2}
+\frac{\pi(Q^4+4\mathcal{J}^2)}{\mathcal{A}}} \quad\quad\text{ whenever }\quad\mathcal{A}\geq 4\pi\sqrt{Q^4+4\mathcal{J}^2},
\end{equation}
where $m$ is the ADM mass, $\mathcal{A}$ is the event horizon cross-sectional area, and the total angular momentum and charges take the form
\begin{equation}
\mathcal{J}=\frac{1}{8\pi}
\int_{S_\infty}(k_{ij}-(\tr_{g}k)g_{ij})\nu^i\eta^j,\quad\quad
Q_e=\frac{1}{4\pi}
\int_{S_\infty}E_i\nu^i,\quad\quad Q_b=\frac{1}{4\pi}
\int_{S_\infty}B_i\nu^i,
\end{equation}
with $Q^2=Q_e^2+Q_b^2$. In these formulas $S_\infty$ represents the limit as $r\rightarrow \infty$ for coordinate spheres $S_r$ in the asymptotic end, and $\nu$ is the unit outer normal. Inequality \eqref{penrose} was proposed as a check on the final state conjecture and weak cosmic censorship, in that a counterexample would essentially disprove at least one of these grand conjectures. Details concerning the heuristic derivation of this most general form of the Penrose inequality are provided in \cite{DainKhuriWeinsteinYamada}.
Furthermore an independent heuristic motivation for this inequality, based on Bekenstein's entropy bound \cite{Bekenstein}, has been given in \cite{JaraczKhuri}.

In order to prove Penrose type inequalities it is customary to replace $\mathcal{A}$ in the maximal case with the area of the outermost minimal surface, and in the general case with the minimum area required to enclose the outermost apparent horizon. Therefore, the manifold $(M,g)$ will be taken to have a boundary consisting of a single component minimal surface. Note that simple connectivity then implies that the boundary must be topologically a 2-sphere, regardless of whether this surface is stable. Moreover, the auxiliary inequality of \eqref{penrose} is not needed in the single black hole case, since
when the minimal surface is stable the area-angular momentum-charge inequality is known to be automatically satisfied \cite{Dain,DainGabach,ClementJaramilloReiris}.

The Penrose inequality without angular momentum and charge was established in the time-symmetric case through the ground breaking work of Bray \cite{Bray} and
Huisken-Ilmanen \cite{HuiskenIlmanen}. As shown in \cite{WeinsteinYamada}, the addition of charge to this inequality requires the additional assumption of the
area-charge inequality in the multiple black hole case. This version of the Penrose inequality was then established
in \cite{KhuriWeinsteinYamada,KhuriWeinsteinYamada1} by generalizing Bray's conformal flow. Inequalities providing a lower bound for the mass in terms of
angular momentum and charge, which are implied by \eqref{penrose}, have also been thoroughly established \cite{ChruscielCosta,ChruscielLiWeinstein,Costa,Dain0,KWcharge,SchoenZhou}. However, it turns out that including horizon area together with angular momentum is quite difficult. In fact, there appear to be only two results in the literature to date in this direction \cite{Anglada,Anglada1}, and the approach taken in those articles is based on inverse mean curvature flow. In contrast, the present paper focuses on the techniques used to establish the mass-angular momentum inequalities, namely minimizing renormalized harmonic energies. We refer the reader to the excellent survey \cite{Mars} for a more detailed account concerning the status of the Penrose inequality.

The results presented here rely on the existence of Weyl coordinates. These are
cylindrical type coordinates $(\rho,z,\phi)$ with $\rho\geq 0$, $-\infty<z<\infty$, $0\leq\phi<2\pi$ that are typically associated with the study of stationary axisymmetric black holes, and play an important role in that setting by helping to reduce the Einstein equations to the study of a harmonic map. Details describing this coordinate system for the present situation are given in the appendix, and are discussed in the next section. It has been shown in \cite{ChruscielNguyen} that such a coordinate system exists more generally for simply connected, asymptotically flat initial data sets. In these coordinates the metric takes the form
\begin{equation}\label{brill}
g=e^{-2U+2\alpha}(d\rho^{2}+dz^{2})+\rho^{2}e^{-2U}(d\phi+A_\rho d\rho+A_z dz)^{2},
\end{equation}
where $\eta=\partial_\phi$ is the rotational Killing field, and all the coefficient functions are smooth and axisymmetric.
In these coordinates the minimal surface horizon is identified with the interval  $(-m_0,m_0)$ on the $z$-axis. The constant $m_0>0$ is uniquely determined by
the geometry of the initial data, and $2m_0$ will be referred to as the horizon rod length. The functions $U$ and $\alpha$ exhibit singular behavior at the horizon, and this may be modeled by the corresponding functions $U_0$, $\alpha_0$ arising from the Schwarzschild solution having mass $m_0$. We may then write $U=U_0+\overline{U}$ and $\alpha=\alpha_0+\overline{\alpha}$, where the remainders $\overline{U}$ and $\overline{\alpha}$ are now uniformly bounded and possess bounded first derivatives even at the horizon. These `renormalized' functions measure the deviation from the Schwarzschild solution. An important combination of these two which appears in the horizon area formula is $\overline{\beta}:=\overline{\alpha}-2\overline{U}$. For the Kerr black hole this quantity may be expressed nicely in terms of surface gravity, see Appendix \ref{sec8}. In what follows the ADM mass/energy of the initial data will be denoted by $m$, and we note that the asymptotically flat asymptotics \eqref{ilil} guarantee that total mass and energy agree, that is the ADM linear momentum vanishes. Our main result may then be stated as follows.

\begin{theorem}\label{maintheorem}
Let $(M,g,k,E,B)$ be a simply connected, axisymmetric, maximal, asymptotically flat initial data set for the Einstein-Maxwell equations with minimal surface boundary, having nonnegative energy density $\mu_{em}\geq 0$, no charged matter, and satisfying the compatibility condition for the existence of a twist potential $J_{em}(\eta)=0$. Let $A_k$ and $\overline{\beta}_k$ denote the horizon area and Weyl coordinate function for the unique Kerr-Newman black hole sharing the same angular momentum, charge, and horizon rod length as the initial data set. Then
\begin{equation}\label{4123}
m\geq   \sqrt{\frac{A_{k}}{16\pi}+\frac{Q^2}{2}+\frac{\pi(Q^4+4\mathcal{J}^2)}{A_{k}}} +\frac{1}{4}\int_{-m_0}^{m_0}(\overline{\beta}(0,z)-\overline{\beta}_{k}(0,z))dz,
\end{equation}
and equality occurs if and only if the initial data agree with that of the corresponding Kerr-Newman spacetime.
\end{theorem}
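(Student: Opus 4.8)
The plan is to adapt the strategy used for the mass-angular-momentum-charge inequalities: reduce the ADM mass, via the constraints and the weak energy condition, to a renormalized harmonic-energy functional on the orbit space; show this functional is minimized by the reduced data of the model Kerr--Newman black hole; and evaluate it there. The horizon area enters only through a boundary contribution carrying the factor $\overline{\beta}$. Throughout one works in the Weyl coordinates \eqref{brill} supplied by \cite{ChruscielNguyen} and detailed in the appendix.

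For the mass lower bound, I would compute the scalar curvature of \eqref{brill} in terms of $U$, $\alpha$, and the twist one-form of $\eta$, substitute into the Hamiltonian constraint, and use maximality, $\mu_{em}\ge 0$, and \eqref{1.1} to discard a manifestly nonnegative combination of matter terms. The compatibility condition $J_{em}(\eta)=0$ yields a global twist potential $v$, while \eqref{1.1} and axisymmetry yield electric and magnetic potentials $\chi,\psi$; in these variables the surviving bulk integral is precisely the harmonic map energy $\mathcal E(\Phi)$ of $\Phi=(U,v,\chi,\psi)$ into the negatively curved symmetric space $SU(2,1)/S(U(2)\times U(1))$ of Einstein--Maxwell theory. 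Since this energy diverges logarithmically at the horizon rod $(-m_0,m_0)$ and along the axis, I would renormalize by subtracting the Schwarzschild-$m_0$ contribution; tracking the horizon boundary integral through the splittings $U=U_0+\overline{U}$, $\alpha=\alpha_0+\overline{\alpha}$ isolates the term $\tfrac14\int_{-m_0}^{m_0}\overline{\beta}(0,z)\,dz$, while the asymptotics \eqref{ilil} guarantee convergence at infinity and reproduce $m$. The outcome should be $m\ge \mathcal M(\Phi)+\tfrac14\int_{-m_0}^{m_0}\overline{\beta}(0,z)\,dz$ for a finite renormalized energy $\mathcal M$ depending only on $\Phi$ and the rod structure, with equality forcing every discarded matter term to vanish.

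Next I would prove $\mathcal M(\Phi)\ge \mathcal M(\Phi_k)$, where $\Phi_k$ is the reduced data of the unique sub-extremal Kerr--Newman black hole with the same $\mathcal J$, $Q$, and rod length $2m_0$, whose existence and uniqueness are established in Appendix \ref{sec8}. Since the target symmetric space has nonpositive curvature, $t\mapsto\mathcal E(\Phi_t)$ is convex along the geodesic homotopy $\Phi_t$ from $\Phi_k$ to $\Phi$; as $\Phi_k$ is harmonic the first variation at $t=0$ vanishes, provided the boundary terms at the axis, at the rod endpoints $z=\pm m_0$, and at infinity can be discarded. This is the main obstacle: it requires $\Phi$ and $\Phi_k$ to share the same rod data (the angular-momentum and charge distributions along the axis and the horizon) together with quantitative decay of the target distance $d_{\mathcal H}(\Phi,\Phi_k)$ on the singular set and at infinity, which is delicate because \eqref{ilil} permits only $r^{-1/2-\epsilon}$ decay of the metric; a secondary, purely computational hurdle is pinning down the coefficient $\tfrac14$ in the previous step. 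Granting the minimization, applying the mass inequality to the canonical Kerr--Newman slice itself — where $\mu_{em}=0$, so it holds with equality — gives $\mathcal M(\Phi_k)=m_k-\tfrac14\int_{-m_0}^{m_0}\overline{\beta}_k(0,z)\,dz$ with $m_k$ the Kerr--Newman mass, and the Christodoulou--Ruffini/Smarr identity evaluates $m_k=\sqrt{A_k/16\pi+Q^2/2+\pi(Q^4+4\mathcal J^2)/A_k}$. Chaining the three relations yields \eqref{4123}.

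For the rigidity statement, equality in \eqref{4123} forces equality in both the mass bound and the minimization: every discarded matter term vanishes identically (so $\mu_{em}\equiv 0$ and $k$, $E$, $B$ take their Kerr--Newman form in the directions transverse to $\eta$), and $\mathcal E(\Phi)=\mathcal E(\Phi_k)$. Strict convexity of the harmonic energy transverse to the Kerr--Newman orbit then forces $\Phi=\Phi_k$, so the potentials $U,v,\chi,\psi$ agree with the Kerr--Newman ones; combined with the vanishing matter terms, the forced identity $\overline{\beta}=\overline{\beta}_k$ on the horizon, and recovery of $\alpha$ from the field equation it satisfies, this identifies $g$, $k$, $E$, $B$ with the canonical slice of the Kerr--Newman spacetime. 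The converse is immediate, since that slice meets all the hypotheses and turns every inequality above into an equality.
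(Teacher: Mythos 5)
Your outline follows the paper's strategy essentially step for step: the Brill-type mass formula in Weyl coordinates reducing $m$ to a renormalized harmonic energy plus the horizon-rod integral $\tfrac14\int\overline{\beta}$ plus $m_0$ (this is \eqref{26}), minimization of that energy by the Kerr--Newman map via convexity of harmonic maps into $\mathbb{H}^2_{\mathbb{C}}$, evaluation of the Kerr--Newman energy by the fact that its canonical slice saturates the inequality (this is \eqref{4124}), and rigidity from a quantitative form of strict convexity. So the route is the same, and you have correctly located the main difficulty; but you leave that difficulty unresolved, and it is precisely the paper's central technical contribution. The issue is not merely that boundary terms in the first variation must be discarded: the full harmonic energy $E(\tilde\Psi)$ is \emph{infinite} (the term $\int|\nabla(\log\rho-U_0)|^2$ diverges along the whole axis and horizon rod), so convexity of $E$ along geodesic homotopies does not pass directly to the finite renormalized functional $\mathcal I$, and one cannot simply ``apply convexity to $\mathcal E(\Phi_t)$ and discard boundary terms.'' The paper's resolution is a cut-and-paste regularization: approximating maps $\Psi_{\delta,\varepsilon}=\mathcal G_\varepsilon(\mathcal F_\delta(\mathcal F^1_\delta(\Psi)))$ are built with cut-off functions so that the potentials agree with those of $\Psi_k$ in neighborhoods of the poles $p_\pm$, the axis/horizon cylinder $\{\rho<\varepsilon\}$, and outside a large ball; on the region where the maps agree the geodesic reduces to a linear interpolation of $\overline U$, where convexity of $\mathcal I$ can be checked by hand, while on the complementary region \eqref{51} transfers convexity from $E$. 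One must then prove $\mathcal I(\Psi_{\delta,\varepsilon})\to\mathcal I(\Psi)$ (Lemmas \ref{noc2} and \ref{noc3}), which consumes all the pole and rod asymptotics of Section \ref{sec3}, and show the first-variation boundary terms \eqref{alsk}--\eqref{qpwo} vanish in the limit. Without this (or an equivalent device) your step ``$\mathcal M(\Phi)\ge\mathcal M(\Phi_k)$'' is not established.

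Two smaller corrections: you should not require $\Phi$ and $\Phi_k$ to share rod data ``on the horizon'' --- the potentials need only agree on the axis components $|z|>m_0$ (where they are constants fixed by $\mathcal J$, $Q_e$, $Q_b$); on the horizon rod they generally differ, which is exactly why the paper integrates along radial lines from the poles to get \eqref{064}. And for rigidity, ``strict convexity transverse to the Kerr--Newman orbit'' is made precise in the paper as the gap bound \eqref{10293}, obtained by integrating the second-variation inequality twice and applying a Sobolev inequality; equality then forces $\Psi=\Psi_k$ pointwise, after which the vanishing of the discarded matter terms in \eqref{26} recovers the full data set as in \cite{KWcharge}.
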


The hypotheses of this theorem are in agreement with those expected for the conjectured Penrose inequality with angular momentum and charge, except for one missing statement. Namely, in the above result the minimal surface boundary is not required to be outerminimizing, meaning it is not required to have the property that every surface which encloses it has area greater than or equal to $A=|\partial M|$. This property is necessary, however, for the actual Penrose inequality as counterexamples are known to exist without it. Thus, Theorem \ref{maintheorem} holds under more general circumstances than those for which the Penrose inequality can be valid, and so the resulting inequality \eqref{4123} must differ from \eqref{penrose}. Indeed, the most apparent difference arises from the presence of the horizon rod integral involving the functions $\overline{\beta}$ and $\overline{\beta}_k$, which does not appear in the Penrose inequality. This integral measures the discrepancy between the initial data and the model Kerr-Newman solution. It is unknown at this time whether this horizon integral is nonnegative under the current hypotheses. One may speculate that nonnegativity is not necessarily guaranteed unless the boundary is outerminimizing. After all $\overline{\beta}$, like the outerminimizing condition, is non-local.
Another difference between \eqref{4123} and the conjectured inequality is the presence of the Kerr-Newman horizon area $A_k$ instead of $A$, although the algebraic structure of this part of the inequality is the same. Despite these differences, one may achieve the desired Penrose inequality under additional
assumptions. In particular, if we assume that the initial data is appropriately
similar to the model Kerr-Newman solution at the horizon then the conjectured inequality follows.

\begin{cor}\label{corollary}
Under the hypotheses of Theorem \ref{maintheorem}, assume further that
$A\geq A_{k}$ and  $\overline{\beta}$ is constant on the horizon rod, then
\begin{equation}
m\geq \sqrt{\frac{A_k}{16\pi}+\frac{Q^2}{2}+\frac{\pi(Q^4+4\mathcal{J}^2)}{A_k}},
\end{equation}
and equality occurs if and only if the initial data agree with that of the corresponding Kerr-Newman spacetime. In particular, if $A=A_k$ then the Penrose inequality with angular momentum and charge holds.
\end{cor}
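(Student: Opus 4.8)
The plan is to derive Corollary \ref{corollary} from Theorem \ref{maintheorem} by showing that, under the two extra hypotheses, the horizon rod integral $\tfrac{1}{4}\int_{-m_0}^{m_0}\bigl(\overline{\beta}(0,z)-\overline{\beta}_{k}(0,z)\bigr)\,dz$ appearing in \eqref{4123} is nonnegative. The key input is the horizon area formula in Weyl coordinates. Since $\partial M$ is identified with the rod $\{\rho=0,\ |z|<m_0\}$, a short computation of its area as a limit of the areas of the surfaces $\{\rho=\epsilon\}$ using the metric \eqref{brill} gives $A=2\pi\int_{-m_0}^{m_0}\lim_{\rho\to0}\bigl(\rho\,e^{-2U+\alpha}\bigr)\,dz$. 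Splitting $U=U_0+\overline{U}$, $\alpha=\alpha_0+\overline{\alpha}$ and using that the Schwarzschild data of mass $m_0$ satisfy $\lim_{\rho\to0}\bigl(\rho\,e^{\alpha_0-2U_0}\bigr)=4m_0$ uniformly along the rod, this reduces to
\begin{equation*}
A=8\pi m_0\int_{-m_0}^{m_0}e^{\overline{\beta}(0,z)}\,dz,\qquad A_{k}=8\pi m_0\int_{-m_0}^{m_0}e^{\overline{\beta}_{k}(0,z)}\,dz,
\end{equation*}
the second identity being the same computation for the model Kerr-Newman data, which by hypothesis has the same rod length $2m_0$.

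With these formulas in hand the remaining argument is elementary convexity. If $\overline{\beta}$ is constant along the rod, then $\int_{-m_0}^{m_0}\overline{\beta}(0,z)\,dz=2m_0\log\bigl(A/(16\pi m_0^{2})\bigr)$, while concavity of $\log$ (Jensen's inequality applied to $e^{\overline{\beta}_{k}(0,\cdot)}$ over an interval of length $2m_0$) gives $\int_{-m_0}^{m_0}\overline{\beta}_{k}(0,z)\,dz\le 2m_0\log\bigl(A_{k}/(16\pi m_0^{2})\bigr)$; subtracting and invoking $A\ge A_{k}$ yields $\int_{-m_0}^{m_0}\bigl(\overline{\beta}(0,z)-\overline{\beta}_{k}(0,z)\bigr)\,dz\ge 2m_0\log(A/A_{k})\ge 0$. (In fact $\overline{\beta}_{k}$ is itself constant along the horizon rod for every Kerr-Newman solution, as can be read off from Appendix \ref{sec8}, so here the use of Jensen is an equality.) Substituting into \eqref{4123} gives the stated lower bound for $m$, and when $A=A_{k}$ its right-hand side is exactly the Penrose bound of \eqref{penrose}, whose auxiliary area condition is automatically satisfied in the single black hole case, as recalled in the Introduction.

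For rigidity, suppose equality holds in the Corollary. Comparing with \eqref{4123} forces the rod integral to be $\le 0$, hence $=0$ by the reverse inequality just proved, so equality holds in \eqref{4123} and Theorem \ref{maintheorem} implies the data arise from the canonical slice of a Kerr-Newman spacetime. Conversely, Kerr-Newman data satisfy $\overline{\beta}=\overline{\beta}_{k}$, which is constant along the rod, and $A=A_{k}$, so every hypothesis of the Corollary holds, the rod integral vanishes, and the rigidity part of Theorem \ref{maintheorem} gives equality. The only delicate point is the horizon area formula above — that $\overline{\beta}$ is precisely the combination governing $A$ and that the Schwarzschild model contributes the constant $4m_0$ along the entire rod — but this is nothing more than the renormalization $U=U_0+\overline{U}$, $\alpha=\alpha_0+\overline{\alpha}$ from the Introduction made quantitative, and everything after it is routine bookkeeping.
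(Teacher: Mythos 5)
Your proposal is correct and follows essentially the same route as the paper: the paper likewise uses the horizon area formula $\tfrac{1}{2m_0}\int_{-m_0}^{m_0}e^{\overline{\beta}(0,z)}dz=\tfrac{A}{16\pi m_0^{2}}$ (and its Kerr--Newman counterpart), the constancy of $\overline{\beta}$ and of $\overline{\beta}_{k}$ on the rod, and $A\geq A_{k}$ to conclude that the rod integral in \eqref{4123} is nonnegative, with rigidity reduced to that of Theorem \ref{maintheorem}. The only difference is cosmetic: you spell out the derivation of the area formula from \eqref{brill} and \eqref{lll} and phrase the comparison via Jensen's inequality, which, as you note, is an equality here because $\overline{\beta}_{k}$ is constant.
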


This type of result may be considered a generalization of that of Gibbons and
Holzegel in \cite{GH}, who established the Penrose inequality without contributions from angular momentum and charge by utilizing the advantages of Weyl coordinates. In that paper they also had a more stringent condition than that of Corollary \ref{corollary}, concerning the agreement between the initial data and associated Schwarzschild solution on the horizon. Another related result is that of Chrusciel and Nguyen \cite{ChruscielNguyen} who utilize a related coordinate system referred to as pseudospherical coordinates, and obtain a mass lower bound in terms of the horizon rod length.

This paper is organized as follows. In Section \ref{sec2} we obtain the preliminary mass lower bound arising from Weyl coordinates, and relate it to a reduced harmonic energy. Section \ref{sec3} is dedicated to examining the various asymptotics of
relevant quantities in Weyl coordinates, and in Section \ref{sec4} it is established that the Kerr-Newman black hole is a global minimizer of the reduced harmonic energy. Finally, Section \ref{sec5} is dedicated to the proof of the main results.
Two appendices are included to discuss technical issues related to the metric coefficients in Weyl coordinates near the poles of the horizon, as well as the computations to show the relationship between $\overline{\beta}$ and surface gravity in the Kerr setting.

\section{The Mass Formula and Reduced Harmonic Energy}
\label{sec2}\setcounter{equation}{0}
\setcounter{section}{2}

An initial data set $(M,g,k)$ as in Theorem \ref{maintheorem} admits a global set of Weyl coordinates \cite{ChruscielNguyen} $(\rho,z,\phi)$ in which the metric takes the form \eqref{brill} and the scalar curvature is given by \cite{Brill}
\begin{equation}\label{scurvature}
2e^{-2U+2\alpha}R=8\Delta U-4\Delta_{\rho,z}\alpha-4|\nabla U|^{2}
-\rho^{2}e^{-2\alpha}\left(\partial_{z}A_{\rho}-\partial_{\rho}A_{z}\right)^{2},
\end{equation}
where $\Delta$ is the Laplacian with respect to the flat metric on $\mathbb{R}^3$ and
$\Delta_{\rho,z}=\partial_{\rho}^2+\partial_{z}^2$. Since there is a single black hole, or rather one minimal surface boundary component, the $z$-axis is broken up into three intervals or `rods' $(-\infty,-m_0)$, $(-m_0,m_0)$, $(m_0,\infty)$ in which the two semi-infinite rods are the axis and the finite rod represents the horizon boundary $\partial M$. The value $m_0>0$ is uniquely determined by the geometry of the initial data. Let $U_0$ and $\alpha_0$ denote the metric coefficients in Weyl coordinates for the Schwarzschild solution having this same rod structure; note that $m_0$ is then the mass
of this Schwarzschild spacetime. If $r_+=\sqrt{\rho^2+(z-m_0)^2}$ and $r_-=\sqrt{\rho^2+(z+m_0)^2}$ denote the Euclidean distances to the poles $p_+=(0,m_0)$ and $p_-=(0,-m_0)$ in the $\rho z$-plane, then
\begin{equation}\label{Ur}
U_0=\frac{1}{2} \log \frac{r_-+r_+-2m_0}{r_+ + r_- + 2m_0},\quad\quad
\alpha_0=\frac{1}{2} \log \frac{(r_- +r_+)^2-4m^2_0}{4r_- r_+}.
\end{equation}
These functions blow-up on the horizon but are finite along the axis.
In particular
\begin{equation}\label{l}
U_0=-\frac{m_0}{r}+O\left(\frac{1}{r^2}\right),\quad\quad
\alpha_0=O\left(\frac{1}{r^2}\right)\quad\text{ as }r:=\sqrt{\rho^2+z^2}\rightarrow\infty,
\end{equation}
\begin{equation}\label{ll}
U_0=\frac{1}{2}\log\left(\frac{z-m_0}{z+m_0}\right)+O(\rho^2),
\quad\quad \alpha_0=O(\rho^2)\quad\text{ as }\rho\rightarrow 0 \text{ and }
|z|\geq m_0+\epsilon,
\end{equation}
\begin{equation}\label{lll}
U_0=\log\rho+O(1),
\quad\quad \alpha_0=\log\rho+O(1)\quad\text{ as }\rho\rightarrow 0 \text{ and }
|z|\leq m_0-\epsilon,
\end{equation}
where $\epsilon>0$. These Schwarzschild
coefficients play the role of singular part for the metric coefficients of \eqref{brill}. That is, we may write $U=U_0 +\overline{U}$ and $\alpha=\alpha_0 +\overline{\alpha}$ where $\overline{U}$ and $\overline{\alpha}$ remain bounded. In fact, this decomposition has the following regularity properties which are proved in the appendix and rely on the minimal surface condition at the boundary.

\begin{lemma}\label{renormalized}
Under the assumptions of Theorem \ref{maintheorem} the renormalized
functions $\overline{U}$ and $\overline{\alpha}$ are smooth away from the
horizon rod, and have continuous first derivatives everywhere except possibly
at the poles $p_{\pm}$ where they are bounded.
At infinity $\overline{U}=O_{1}(r^{-1/2-\epsilon})$
and $\overline{\alpha}=O_{1}(r^{-1/2-\epsilon})$ for some $\epsilon>0$.
\end{lemma}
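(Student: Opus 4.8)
The plan is to localize the analysis to four types of regions in the half-plane $\{\rho\ge 0\}$ parametrizing $(M,g)$: the interior $\{\rho>0\}$, the two semi-infinite axis rods $\{\rho=0,\ |z|>m_0\}$, the open horizon rod $\{\rho=0,\ |z|<m_0\}$, and the two poles $p_\pm=(0,\pm m_0)$; each calls for a different tool.

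On $\{\rho>0\}$ the claim is immediate, since there $g$ is a smooth metric, \eqref{brill} recovers $U,\alpha,A_\rho,A_z$ algebraically from its components, and the Schwarzschild coefficients \eqref{Ur} are manifestly smooth away from their singular loci $r_\pm=0$, which lie on the axis; hence $\overline U$ and $\overline\alpha$ are smooth for $\rho>0$. Along an axis rod $|z|>m_0$ I would use that this portion of the $z$-axis consists of fixed points of the $U(1)$ action through which $g$ extends smoothly. Elementary flatness (cone angle $2\pi$) then forces $e^{-2U}$ to extend to a smooth positive function even in $\rho$ and $\alpha$ to vanish to second order in $\rho$; by \eqref{ll} the functions $U_0,\alpha_0$ do the same, so $\overline U$ and $\overline\alpha$ extend smoothly as even functions of $\rho$ across the axis rods, with $\overline\alpha$ vanishing there.

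The heart of the matter — and the only place the minimal surface hypothesis is used — is the horizon rod. The point is that $U_0$ has been chosen so as to carry precisely the singular profile forced at a regular minimal surface in Weyl coordinates. To make this precise I would combine the curvature identity \eqref{scurvature} with the Hamiltonian and momentum constraints \eqref{1} (the latter, together with $J_{em}(\eta)=0$, producing a twist potential) and with the first-order relations expressing $\partial_\rho\alpha,\partial_z\alpha$ in terms of $U$, the twist potential, and the electromagnetic field; this exhibits $U$ as a solution of a semilinear elliptic equation whose only line-source singularity along $\{\rho=0,\ |z|<m_0\}$ is the $\log\rho$ term generated by a uniform rod on $(-m_0,m_0)$, the minimal surface condition being exactly what fixes the rod's density and the value $m_0$. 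Since $U_0$ is the Newtonian potential of that very rod, $\overline U=U-U_0$ solves an elliptic equation with no line source along the horizon rod, so standard elliptic regularity up to the boundary gives that $\overline U$ — and then $\overline\alpha$, obtained by integrating the first-order relations with $\partial_\rho\alpha_0,\partial_z\alpha_0$ governed by the same relations in $U_0$ — extends $C^1$ up to the open horizon rod. At the poles $p_\pm$ the horizon rod abuts an axis rod, the rod structure has a corner, and the coordinate degeneration is more severe; there I expect only boundedness of $\overline U,\overline\alpha$ together with bounded first derivatives, without continuity of the derivatives. This boundary analysis near the rod and, especially, near the poles is the principal obstacle, and is the portion carried out in detail in the appendix.

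For the behavior at infinity I would use that near $M_{\text{end}}$ the Weyl coordinates are asymptotic to the standard cylindrical coordinates on $\mathbb R^3\setminus\text{Ball}$ (part of the construction in \cite{ChruscielNguyen}), so the asymptotics \eqref{ilil}, which include $g_{ij}-\delta_{ij}=O_\ell(r^{-1/2-\epsilon})$ with $\ell\ge 5$, translate into $U,\alpha=O_1(r^{-1/2-\epsilon})$. Since \eqref{l} gives $U_0=O_1(r^{-1})$ and $\alpha_0=O_1(r^{-2})$, subtracting yields $\overline U,\overline\alpha=O_1(r^{-1/2-\epsilon})$, after relabeling $\epsilon$ to be at most $\tfrac12$ if necessary.
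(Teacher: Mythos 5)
Your decomposition into interior, axis, horizon rod, and poles is reasonable, and the interior, axis, and decay-at-infinity portions are fine. But the argument breaks down exactly where the lemma has content. First, your treatment of the horizon rod leans on ``first-order relations expressing $\partial_\rho\alpha,\partial_z\alpha$ in terms of $U$, the twist potential, and the electromagnetic field.'' Those quadrature relations are a feature of \emph{stationary} axisymmetric solutions of the field equations; for a general initial data set $\alpha$ is an independent metric coefficient and no such relations hold. Likewise there is no semilinear elliptic equation determining $U$: the identity \eqref{scurvature} only relates $\Delta U$, $\Delta_{\rho,z}\alpha$ and $R$, and the hypotheses control $R$ only through the inequality $\mu_{em}\geq 0$. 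So the proposed elliptic-regularity route to $C^1$ behavior of $\overline{U},\overline{\alpha}$ up to the horizon rod does not get off the ground. Second, and more seriously, the assertion that matters most --- boundedness of the first derivatives at the poles $p_{\pm}$ --- is stated only as an expectation and explicitly deferred (``the portion carried out in detail in the appendix''), with no mechanism offered for why the derivatives stay bounded there and, crucially, no identified role for the minimal surface hypothesis at that step.

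The paper's actual argument is of a different nature and worth contrasting. It works in the pseudospherical coordinates of Chru\'sciel--Nguyen, in which $\partial M$ is a semicircle $r_s=m_0/2$ and the coefficients $U_s,\alpha_s$ are smooth up to the boundary with $\alpha_s=0$ on the axis; one then checks the exact identities $\overline{U}=U_s-U_{s,0}$ and $\overline{\alpha}=\alpha_s$, so boundedness and decay are immediate, and the only possible loss of regularity comes from the conformal map $\zeta=\zeta_s-m_0^2/(4\zeta_s)$, whose inverse has derivative blowing up like $|\zeta_s\mp\tfrac{m_0}{2}i|^{-1}$ at the poles. The minimal surface condition enters precisely here: it gives $\partial_{r_s}(U_s-\tfrac{1}{2}\alpha_s)=2/m_0=\partial_{r_s}U_{s,0}$ on $r_s=m_0/2$, which together with $\alpha_s=0$ on the axis and axisymmetry forces the first derivatives of $U_s-U_{s,0}$ to vanish at the poles, cancelling the Jacobian blow-up and yielding bounded first derivatives of $\overline{U}$ (and similarly for $\overline{\alpha}$). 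To complete your proof you would need either to reproduce this cancellation or to supply a genuinely different quantitative reason why the derivative blow-up at the corner of the rod structure is compensated.
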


Let us now use this decomposition of the metric coefficients to compute the ADM mass. Recall from \cite{ChruscielNguyen} that if $S_{\infty}$ represents the limit as $r\rightarrow\infty$ for coordinate spheres $S_r$ then the mass is given by
\begin{equation}\label{MASS}
m=\frac{1}{8\pi}
\int_{S_{\infty}}\left[\partial_{r}(2U-\alpha)+\frac{\alpha}{r}\right]d\sigma.
\end{equation}
The boundary terms at infinity in this formula arise from integrating the scalar curvature formula \eqref{scurvature}. Observe that
\begin{align}\label{5}
\begin{split}
\int_{\mathbb{R}^{3}}\Delta_{\rho,z}\alpha dx=&
\int_{\mathbb{R}^{2}_{+}}2\pi\rho\Delta_{\rho,z}\alpha d\rho dz\\
=&\lim_{\varepsilon\rightarrow 0}\int_{\rho=\varepsilon}2\pi(\alpha-\rho\partial_{\rho}\alpha)dz
+\lim_{r\rightarrow\infty}\int_{\partial D_{r}^{+}}2\pi(\rho\partial_{r}\alpha-\alpha\sin\theta)ds\\
=&\lim_{\varepsilon\rightarrow 0}\int_{\rho=\varepsilon}2\pi(\alpha-\rho\partial_{\rho}\alpha)dz
+\int_{S_{\infty}}\left(\partial_r \alpha-\frac{\alpha}{r}\right)d\sigma.
\end{split}
\end{align}
Here $D_{r}^{+}$ is the half disk of radius $r$, and $\rho=r\sin\theta$ and $z=r\cos\theta$. Furthermore
\begin{equation}\label{6}
\int_{\mathbb{R}^{3}}\Delta U dx=\int_{S_{\infty}}\partial_{r}U d\sigma
-\lim_{\varepsilon\rightarrow 0}\int_{\rho=\varepsilon}2\pi\rho\partial_{\rho}U dz,
\end{equation}
and since $U_0=O_1(r^{-1})$ as $r\rightarrow\infty$ with $U_0$ harmonic
\begin{align}\label{7}
\begin{split}
\int_{\mathbb{R}^{3}}|\nabla U|^{2}dx=&\int_{\mathbb{R}^{3}}|\nabla(U_{0}+\overline{U})|^{2}dx\\
=&\int_{\mathbb{R}^{3}}\left(|\nabla \overline{U}|^{2}
+\nabla (U_{0}+2\overline{U})\cdot\nabla U_{0}\right)dx\\
=&\int_{\mathbb{R}^{3}}|\nabla \overline{U}|^{2}dx
-\lim_{\varepsilon\rightarrow 0}\int_{\rho=\varepsilon}(U_{0}+2\overline{U})\partial_{\rho}U_{0}d\sigma
+\int_{S_{\infty}}(U_{0}+2\overline{U})\partial_{r}U_{0}d\sigma\\
=&\int_{\mathbb{R}^{3}}|\nabla \overline{U}|^{2}dx
-\lim_{\varepsilon\rightarrow 0}\int_{\rho=\varepsilon}2\pi\rho(U_{0}+2\overline{U})\partial_{\rho}U_{0}dz.
\end{split}
\end{align}
Therefore by integrating the scalar curvature formula, and putting all these computations together, we find that
\begin{align}\label{8}
\begin{split}
8\pi m=&\int_{\mathbb{R}^{3}}\left[|\nabla \overline{U}|^{2}
+\frac{1}{2}e^{-2U+2\alpha}R+\frac{1}{4}
\rho^2 e^{-2\alpha}(\partial_z A_\rho-\partial_\rho A_z)^2\right]dx\\
&+\lim_{\varepsilon\rightarrow 0} \int_{\rho=\varepsilon}\left[4\pi\rho\partial_{\rho}U
+2\pi(\alpha-\rho\partial_{\rho}\alpha)
-2\pi\rho(U_{0}+2\overline{U})\partial_{\rho}U_{0}\right]dz.
\end{split}
\end{align}

Consider now the boundary integrals in \eqref{8}. Computations show that
\begin{equation}
\lim_{\varepsilon\rightarrow 0}\int_{\rho=\varepsilon}
 \rho \partial_{\rho}U dz
=\lim_{\varepsilon\rightarrow 0}\int_{\rho=\varepsilon}
\varepsilon \partial_{\rho}U_0(\varepsilon, z)dz=2 m_0,
\end{equation}
and
\begin{align}
\begin{split}
\lim_{\varepsilon\rightarrow 0}\int_{\rho=\varepsilon}
\left[\alpha-\rho\partial_{\rho}\alpha
-\rho(U_0+2\overline{U})\partial_{\rho}U_0\right] dz
=&\lim_{\varepsilon\rightarrow 0}
\int_{\rho=\varepsilon \atop |z|<m_0}\left(\alpha_0+\overline{\alpha}-\rho\partial_{\rho}\alpha_0
-U_0-2\overline{U}\right)dz\\
=&\int_{-m_0}^{m_0}(\overline{\alpha}-2\overline{U})(0,z)dz.
\end{split}
\end{align}
Furthermore, simple connectedness and the divergence free condition for the electric and magnetic fields gives rise to electromagnetic potentials
\cite[Section 2]{KWcharge}
\begin{equation}
d\psi =F(\eta,\cdot),\quad\quad\quad
d\chi =\star F(\eta,\cdot),
\end{equation}
where $F$ is the field strength tensor and $\star$ denotes the Hodge star operation. Similarly the compatibility condition $J_{em}(\eta)=0$ guarantees the existence of a charged twist potential
\begin{equation}
dv=k(\eta)\times \eta-\chi d\psi+\psi d\chi.
\end{equation}
Since the initial data are maximal, nonnegativity of the energy density $\mu_{em}\geq 0$ implies the following lower bound \cite[Section 2]{KWcharge} for scalar curvature
\begin{equation}
R\geq |k|^{2}_g+2(|E|^2_g+|B|^2_g)\geq
2\frac{e^{6U-2\alpha}}{\rho^{4}}|\nabla v+\chi\nabla\psi-\psi\nabla\chi|^{2}
+2\frac{e^{4U-2\alpha}}{\rho^{2}}(|\nabla\chi|^2+|\nabla\psi|^2).
\end{equation}
Putting all this together yields the mass lower bound
\begin{align}\label{26}
\begin{split}
m\geq&\frac{1}{8\pi}\int_{\mathbb{R}^{3}}\left(|\nabla \overline{U}|^{2}
+\frac{e^{4U}}{\rho^{4}}|\nabla v+\chi\nabla\psi-\psi\nabla\chi|^{2}
+\frac{e^{2U}}{\rho^{2}}(|\nabla\chi|^2+|\nabla\psi|^2)\right)dx\\
&+\frac{1}{4}\int_{-m_0}^{m_0}(\overline{\alpha}(0,z)-2\overline{U}(0,z))dz
+m_0.
\end{split}
\end{align}
Related formulas were obtained in \cite{Chrusciel,ChruscielNguyen} and \cite{GH} in different settings.

The volume integral on the right-hand side of \eqref{26} is directly related to the harmonic energy of maps between $\mathbb{R}^3\setminus\Gamma\rightarrow \mathbb{H}^{2}_{\mathbb{C}}$, where $\Gamma=\{\rho=0, |z|>m_0\}$ is the axis. More precisely, let $\tilde{\Psi}=(u,v,\chi,\psi):\mathbb{R}^{3}\setminus\Gamma\rightarrow
\mathbb{H}_{\mathbb{C}}^{2}$ and consider the harmonic energy of this map on a bounded domain $\Omega\subset\mathbb{R}^{3}\setminus\Gamma$:
\begin{equation}
E_{\Omega}(\tilde{\Psi})=\int_{\Omega}|\nabla u|^{2}+e^{4u}|\nabla v
+\chi\nabla\psi-\psi\nabla\chi|^{2}
+e^{2u}\left(|\nabla\chi|^{2}
+|\nabla\psi|^{2}\right)dx.
\end{equation}
Set $u=U-\log\rho$, then the reduced
energy $\mathcal{I}_{\Omega}$ of the renormalized map $\Psi=(\overline{U},v,\chi,\psi)$ is related to the harmonic energy of $\tilde{\Psi}$ by
\begin{equation}\label{51}
\mathcal{I}_{\Omega}(\Psi)=E_{\Omega}(\tilde{\Psi})
+\int_{\partial\Omega}(2\overline{U}+U_0-\log\rho)\partial_{\nu}(\log\rho-U_0)d\sigma,
\end{equation}
where $\nu$ denotes the unit outer normal to the boundary $\partial\Omega$ and
\begin{equation}
\mathcal{I}_{\Omega}(\Psi)=
\int_{\Omega}|\nabla \overline{U}|^{2}+\frac{e^{4U}}{\rho^{4}}|\nabla v
+\chi\nabla\psi-\psi\nabla\chi|^{2}
+\frac{e^{2U}}{\rho^{2}}\left(|\nabla\chi|^{2}
+|\nabla\psi|^{2}\right)dx.
\end{equation}
Observe that the volume integral of \eqref{26} is exactly the reduced energy
on $\mathbb{R}^3$, which will be denoted by $\mathcal{I}(\Psi)$.
The relation \eqref{51} is established through an integration by parts, using the fact that $\log\rho$ and $U_{0}$ are harmonic on $\mathbb{R}^{3}\setminus\Gamma$. Namely
\begin{align}
\begin{split}
\mathcal{I}_{\Omega}(\Psi)=&\int_{\Omega}\left(|\nabla(u-U_{0}+\log\rho)|^{2}
+e^{4u}|\nabla v+\chi \nabla \psi-\psi\nabla\chi|^{2}+e^{2u}(|\nabla \chi|^2+|\nabla \psi|^2)\right)dx\\
=&\int_{\Omega}|\nabla u|^{2}+\nabla(2u-U_{0}+\log\rho)\cdot\nabla(\log\rho-U_{0})dx\\
&+\int_{\Omega}e^{4u}|\nabla v+\chi \nabla \psi-\psi\nabla\chi|^{2}+e^{2u}(|\nabla \chi|^2+|\nabla \psi|^2) dx\\
=&\int_{\Omega}\left(|\nabla u|^{2}+e^{4u}|\nabla v+\chi \nabla \psi-\psi\nabla\chi|^{2}+e^{2u}(|\nabla \chi|^2+|\nabla \psi|^2)\right)dx\\
&+\int_{\partial\Omega}(2u-U_{0}+\log\rho)\partial_{\nu}(\log\rho-U_{0})d\sigma\\
=&E_{\Omega}(\tilde{\Psi})+\int_{\partial\Omega}(2\overline{U}
+U_{0}-\log\rho)\partial_{\nu}(\log\rho-U_{0})d\sigma.
\end{split}
\end{align}
The functional $\mathcal{I}$ may be considered a regularization of $E$ since the infinite term $\int|\nabla(\log\rho-U_{0})|^{2}$ has been removed,
and since the two functionals differ only by a boundary term they must have the same critical points.

Let $\tilde{\Psi}_{k}=(u_{k},v_{k},\chi_{k},\psi_{k})$ denote the harmonic map associated with the Kerr-Newman solution, and
let $\Psi_{k}$ be the corresponding renormalized map where $u_{k}=U_{k}-\log\rho
=\overline{U}_{k}+U_{0}-\log\rho$. It follows that
$\Psi_{k}$ is a critical point of $\mathcal{I}$. As will be shown in Section \ref{sec4}, $\Psi_{k}$ realizes the global minimum for $\mathcal{I}$.

\begin{theorem}\label{inf}
Suppose that $\Psi=(\overline{U},v,\chi, \psi)$ is smooth and satisfies the asymptotics \eqref{055}-\eqref{062}.
If $v|_{\Gamma}=v_{k}|_{\Gamma}$, $\chi|_{\Gamma}=\chi_{k}|_{\Gamma}$, and $\psi|_{\Gamma}=\psi_{k}|_{\Gamma}$ then there
exists a constant $C>0$ such that
\begin{equation}\label{10293}
\mathcal{I}(\Psi)-\mathcal{I}(\Psi_{k})
\geq C\left(\int_{\mathbb{R}^{3}}
\operatorname{dist}_{\mathbb{H}_{\mathbb{C}}^{2}}^{6}(\Psi,\Psi_{k})dx
\right)^{\frac{1}{3}}.
\end{equation}
\end{theorem}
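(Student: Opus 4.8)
The plan is to establish quantitative convexity of the reduced energy about the Kerr--Newman critical point, in the spirit of \cite{SchoenZhou}, and then to convert the resulting Dirichlet-energy lower bound into the $L^{6}$ statement via the Sobolev inequality on $\mathbb{R}^{3}$. We may assume $\mathcal{I}(\Psi)<\infty$, since otherwise \eqref{10293} is trivial. View $\Psi$ and $\Psi_k$ as the associated $\mathbb{H}_{\mathbb{C}}^{2}$-valued maps $\tilde\Psi,\tilde\Psi_k$ on $\mathbb{R}^{3}\setminus\Gamma$; as $\mathbb{H}_{\mathbb{C}}^{2}$ is a Hadamard manifold (complete, simply connected, with nonpositive---indeed strictly negative---sectional curvature), for each $x$ there is a unique constant-speed geodesic $t\mapsto\tilde\Psi_t(x)$, $t\in[0,1]$, running from $\tilde\Psi_k(x)$ to $\tilde\Psi(x)$, whose length is $\sigma(x):=\operatorname{dist}_{\mathbb{H}_{\mathbb{C}}^{2}}(\tilde\Psi,\tilde\Psi_k)(x)$; this coincides with the quantity in \eqref{10293} up to the harmless shift in the first coordinate. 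Let $\Psi_t=(\overline{U}_t,v_t,\chi_t,\psi_t)$ denote the correspondingly renormalized family. Using the asymptotics \eqref{055}--\eqref{062} and the hypotheses $v|_\Gamma=v_k|_\Gamma$, $\chi|_\Gamma=\chi_k|_\Gamma$, $\psi|_\Gamma=\psi_k|_\Gamma$, one checks that $\mathcal{I}(\Psi_t)<\infty$ for all $t$, that $\sigma\to 0$ at infinity, and that $\sigma$ is bounded near the axis $\Gamma$ and near the poles $p_\pm$; since $\Gamma$ and $\{p_\pm\}$ are removable for $W^{1,2}$ in $\mathbb{R}^{3}$, this places $\sigma\in\dot W^{1,2}(\mathbb{R}^{3})$.

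The core is a pointwise convexity estimate along this homotopy. With $\{e_i\}$ a local orthonormal frame on $\mathbb{R}^{3}$, each $t\mapsto d\tilde\Psi_t(e_i)$ is a Jacobi field along the geodesic $\tilde\Psi_{\bullet}(x)$; twice differentiating the energy density $e(\tilde\Psi_t)=\sum_i|d\tilde\Psi_t(e_i)|^{2}$ in $t$, invoking the Jacobi equation, and discarding the curvature term (which is nonnegative since the target curvature is nonpositive) gives, for every bounded $\Omega\subset\mathbb{R}^{3}\setminus\Gamma$,
\begin{equation*}
\frac{d^{2}}{dt^{2}}E_{\Omega}(\tilde\Psi_t)\ \ge\ 2\int_{\Omega}|\nabla_t d\tilde\Psi_t|^{2}\,dx .
\end{equation*}
The key point is that the tangential component $\langle d\tilde\Psi_t(e_i),\partial_t\tilde\Psi_t\rangle$ is \emph{affine} in $t$, because $R(\,\cdot\,,\,\cdot\,,\partial_t\tilde\Psi_t,\partial_t\tilde\Psi_t)\equiv 0$; combined with the first variation of arc length, $\sigma\,\partial_i\sigma=\langle\nabla_t d\tilde\Psi_t(e_i),\partial_t\tilde\Psi_t\rangle$ is independent of $t$, whence $|\partial_i\sigma|\le|\nabla_t d\tilde\Psi_t(e_i)|$ for every $t$ and therefore $|\nabla\sigma|^{2}\le|\nabla_t d\tilde\Psi_t|^{2}$ pointwise for all $t\in[0,1]$. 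Consequently $\frac{d^{2}}{dt^{2}}E_{\Omega}(\tilde\Psi_t)\ge 2\int_{\Omega}|\nabla\sigma|^{2}\,dx$.

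It remains to remove the cutoff and integrate in $t$. Exhausting $\mathbb{R}^{3}\setminus\Gamma$ by domains $\Omega$ obtained by deleting a large ball at infinity, a thin tube about $\Gamma$, and small balls about $p_\pm$, and rewriting $E_{\Omega}(\tilde\Psi_t)$ via \eqref{51} in terms of $\mathcal{I}_{\Omega}(\Psi_t)$, the correction boundary integrals and their $t$-derivatives vanish in the limit by Lemma \ref{renormalized} and \eqref{055}--\eqref{062}, so $\frac{d^{2}}{dt^{2}}\mathcal{I}(\Psi_t)\ge 2\int_{\mathbb{R}^{3}}|\nabla\sigma|^{2}\,dx$ for all $t$. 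Since $\tilde\Psi_k$ is a smooth harmonic map, integrating the first variation by parts leaves only boundary terms, which also vanish, so $\frac{d}{dt}\big|_{t=0}\mathcal{I}(\Psi_t)=0$. Taylor's formula with integral remainder then gives
\begin{equation*}
\mathcal{I}(\Psi)-\mathcal{I}(\Psi_k)=\int_{0}^{1}(1-t)\,\frac{d^{2}}{dt^{2}}\mathcal{I}(\Psi_t)\,dt\ \ge\ \int_{\mathbb{R}^{3}}|\nabla\sigma|^{2}\,dx ,
\end{equation*}
and the Sobolev inequality $\big(\int_{\mathbb{R}^{3}}\sigma^{6}\big)^{1/3}\le C^{-1}\int_{\mathbb{R}^{3}}|\nabla\sigma|^{2}$ for $\sigma\in\dot W^{1,2}(\mathbb{R}^{3})$ yields \eqref{10293}.

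The convexity computation is robust; the real work is the boundary book-keeping. One must show that the correction terms in \eqref{51}, the first-variation boundary terms at $t=0$, and the integrations by parts all decay \emph{uniformly} in $t$ at infinity, along the axis, and---most delicately---near the poles $p_\pm$, where by Lemma \ref{renormalized} the renormalized quantities are merely bounded with bounded first derivatives. One must also verify that the geodesic homotopy stays in the function class where \eqref{055}--\eqref{062} and the finiteness of $\mathcal{I}$ persist, and settle the mild measure-theoretic points in extending $\sigma$ across the axis and applying the Sobolev inequality. This is where I expect the main obstacle to lie.
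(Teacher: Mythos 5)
Your overall architecture---geodesic deformation to the Kerr--Newman critical point, second-variation convexity from nonpositive curvature of $\mathbb{H}^{2}_{\mathbb{C}}$, vanishing first variation at the critical point, Taylor's formula, then the Sobolev inequality on $\dot W^{1,2}(\mathbb{R}^3)$---is exactly the paper's strategy. But the step you defer as ``boundary book-keeping'' is the actual content of the proof, and as formulated your argument has a genuine gap there: you run the geodesic homotopy directly between $\tilde\Psi$ and $\tilde\Psi_k$ and assert that the correction boundary integrals in \eqref{51} and their $t$-derivatives vanish in the limit. This fails near the axis. On a tube $\rho=\varepsilon$ about $\Gamma$ the boundary term of \eqref{51} reduces to $\int 2\pi(2\overline{U}_t+U_0-\log\varepsilon)\,dz$, whose second $t$-derivative is $\int 4\pi\,\partial_t^2\overline{U}_t\,dz$. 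Along the true geodesic, $\partial_t^2 u_t = 2e^{4u_t}|\dot v_t+\chi_t\dot\psi_t-\psi_t\dot\chi_t|^2+e^{2u_t}(\dot\chi_t^2+\dot\psi_t^2)$, and near the axis the normalized displacement in the twist direction is $e^{2u}|v-v_k|\sim \rho^{-2}\cdot O(\rho^2)=O(1)$ (the potentials agree only \emph{on} $\Gamma$, not in a neighborhood), so $\partial_t^2\overline{U}_t=O(1)$ rather than $o(1)$ and the boundary contribution does not disappear as $\varepsilon\to 0$. The situation is worse near the horizon rod and the poles, where the potentials of $\Psi$ and $\Psi_k$ need not agree at all. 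This is precisely the obstruction the paper flags when it says convexity ``does not pass directly from $E$ to $\mathcal{I}$ since the energy is applied to singular maps.''

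The missing idea is the cut-and-paste approximation of Section \ref{sec4}: before deforming, one replaces $\Psi$ by $\Psi_{\delta,\varepsilon}=\mathcal{G}_{\varepsilon}(\mathcal{F}_{\delta}(\mathcal{F}^1_{\delta}(\Psi)))$, whose potentials \emph{identically} equal those of $\Psi_k$ on a neighborhood of the axis, the horizon rod, the poles, and outside a large ball, at the cost of an energy error that is shown to vanish (Lemmas \ref{noc2}, \ref{noc3}, and \eqref{con1}). On that neighborhood the geodesic then moves only in the $\overline{U}$-coordinate, so $\overline{U}^t_{\delta,\varepsilon}$ is exactly affine in $t$, the offending boundary terms have identically vanishing second variation, and the first-variation boundary terms \eqref{alsk}--\eqref{qpwo} can be estimated and sent to zero. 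One proves \eqref{10293} for $\Psi_{\delta,\varepsilon}$ and recovers the theorem by \eqref{319}. Without this (or an equivalent regularization), your claim that the correction terms ``decay uniformly in $t$'' is not merely unverified---it is false for the direct homotopy, so the proposal does not close.
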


This is a key result in the proof of the main theorem. Inequality \eqref{10293} together with the mass formula \eqref{26} yield a lower bound for the ADM mass in terms of the reduced energy of the unique Kerr-Newman harmonic map possessing the same angular momentum, charge, and horizon rod length as the given initial data. Since this Kerr-Newman harmonic energy is computed to have the correct expression for the Penrose inequality, the desired result \eqref{4123} follows. Details of the proof are given in Section \ref{sec5}.

\section{Asymptotics in Weyl Coordinates}
\label{sec3}\setcounter{equation}{0}
\setcounter{section}{3}

In order to minimize the functional $\mathcal{I}(\Psi)$ it is necessary to
choose the appropriate asymptotics for the map $\Psi$. The asymptotics will be
guided by the principle of having a finite reduced energy, however the convexity minimization argument of the next section will in general require stronger asymptotics than that which is optimal for integrability. It will be useful to first record the asymptotics of the Schwarzschild metric coefficients near the poles, namely a computation shows that
\begin{equation}\label{52}
e^{U_0}=O(r_+^{1/2})\quad\text{ as }r_+ \rightarrow 0\text{ and }
z\geq m_0,\quad\quad e^{U_0}=O(\rho r_+^{-1/2})\quad\text{ as }r_+ \rightarrow 0\text{ and }z\leq m_0,
\end{equation}
\begin{equation}\label{53}
e^{U_0}=O(\rho r_-^{-1/2})\quad\text{ as }r_- \rightarrow 0\text{ and }
z\geq -m_0,\quad\quad e^{U_0}=O(r_-^{1/2})\quad\text{ as }r_- \rightarrow 0\text{ and }z\leq -m_0,
\end{equation}
\begin{equation}\label{54}
e^{U_0-\alpha_0}=O(r_{\pm}^{1/2})\quad\quad\quad\text{ as }r_{\pm} \rightarrow 0.
\end{equation}

According to Lemma \ref{renormalized} we have
\begin{equation}\label{055}
\overline{U}\in C^{0,1}(\mathbb{R}^3),\quad\quad \overline{U}=O_{1}(r^{-1/2-\epsilon})\quad\text{ as }r\rightarrow\infty,
\end{equation}
which is enough to guarantee that the first term of $\mathcal{I}(\Psi)$ is finite.
Consider now the potential terms and set
$\omega=dv+\chi d\psi-\psi d\chi$. In order to achieve integrability at infinity
and near the axes away from the poles we will require, for $\lambda > \frac{3}{2}$, the following asymptotics
\begin{equation}\label{056}
|\omega|=\rho^{2}O(r^{-\lambda}),\text{ }\text{ }\text{ }\text{ }|\nabla\chi|+|\nabla\psi|=\rho O(r^{-\lambda})\quad\quad\text{ as }r\rightarrow\infty,
\end{equation}
\begin{equation}\label{057}
|\omega|=O(\rho^{2}),\text{ }\text{ }\text{ }\text{ }|\nabla\chi|+|\nabla\psi|=O(\rho) \quad\quad\text{ as }\rho\rightarrow 0\text{ and }|z|> m_0,
\end{equation}
\begin{equation}\label{058}
|\chi|,|\psi|=\text{const}+\rho^{2}O(r^{-\lambda})\quad\quad\text{ as }r\rightarrow\infty,
\end{equation}
\begin{equation}\label{059}
|\chi|,|\psi|=\text{const}+O(\rho^{2})\quad\quad\text{ as }\rho\rightarrow 0\text{ and }|z|> m_0,
\end{equation}
from which it follows that
\begin{equation}\label{060}
|\nabla v|=\rho O(r^{-\lambda+1})\quad\text{ as }r\rightarrow\infty,
\quad\quad
|\nabla v|= O(\rho)\quad\text{ as }\rho\rightarrow 0\text{ and }|z|> m_0.
\end{equation}

It remains to prescribe asymptotics near the poles and in a neighborhood of the
horizon rod. By \eqref{52}, $e^{4U}=O(r_+^2)$ or $e^{4U}=O(\rho^4 r_+^{-2})$ near $p_+$ if $z\geq m_0$ or $z\leq m_0$ respectively. It follows that the second term in $\mathcal{I}(\Psi)$ is integrable near $p_+$ if
\begin{equation}\label{061.0}
|\omega|=\rho^2 O(r_+^{-3/2})\quad \text{ for } z\geq m_0,\quad\quad
|\omega|= O(r_+^{1/2})\quad \text{ for } z \leq m_0.
\end{equation}
Similarly, near $p_-$ we will impose
\begin{equation}\label{061.00}
|\omega|=O(r_-^{1/2})\quad \text{ for } z\geq -m_0,\quad\quad
|\omega|=\rho^2 O(r_-^{-3/2}) \quad \text{ for } z \leq -m_0.
\end{equation}
Analogous considerations lead to the condition near $p_+$
\begin{equation}\label{061}
|\nabla\chi|+|\nabla\psi|=\rho O(r_+^{-1})\quad \text{ for } z\geq m_0,\quad\quad
|\nabla\chi|+|\nabla\psi|= O(1)\quad \text{ for } z \leq m_0,
\end{equation}
and near $p_-$
\begin{equation}\label{061.1}
|\nabla\chi|+|\nabla\psi|=O(1)\quad \text{ for } z\geq -m_0,\quad\quad
|\nabla\chi|+|\nabla\psi|=\rho O(r_-^{-1})\quad \text{ for } z \leq -m_0.
\end{equation}
Next observe that since $e^U =O(\rho)$ near the interior of the horizon rod,
if
\begin{equation}\label{062}
|\omega|=|\nabla\chi|=|\nabla\psi|= O(1)\quad\quad\text{ as }\rho\rightarrow 0\text{ and }|z|< m_0,
\end{equation}
then the last two terms of the reduced energy are integrable in this region.

Lastly we record additional asymptotics that follow from above and will
be needed in the following section. Assuming that the value of the potentials on
the axes agree with those of the potentials for the Kerr-Newman map $\Psi_k$, we may integrate on lines perpendicular to the axes and near $p_{\pm}$ to obtain
\begin{equation}\label{063}
|v-v_{k}|+|\chi-\chi_{k}|+|\psi-\psi_{k}|=O(\rho^2 r_{\pm}^{-1})\quad\quad
\text{ as }r_{\pm}\rightarrow 0\text{ and }|z|\geq m_0.
\end{equation}
For $|z|\leq m_0$, integrating on horizontal lines will not yield such an estimate
since the two sets of potentials do not necessarily agree on the horizon rod. Thus, we integrate along radial lines emanating from the poles $p_{\pm}$ to find
\begin{equation}\label{064}
|v-v_{k}|+|\chi-\chi_{k}|+|\psi-\psi_{k}|=O(r_{\pm})
\quad\quad
\text{ as }r_{\pm}\rightarrow 0\text{ and }|z|\leq m_0.
\end{equation}




\section{Minimizing the Functional}
\label{sec4} \setcounter{equation}{0}
\setcounter{section}{4}

In this section it will be shown that the renormalized Kerr-Newman harmonic map
$\Psi_k$ is the global minimizer of the functional $\mathcal{I}$, among competitors
$\Psi$ satisfying the asymptotics of Section \ref{sec3}. This is based on the convexity of
harmonic energy $E$ for nonpositively curved target spaces under geodesic deformations. Such a strategy has been used successfully in connection with mass-angular momentum-charge inequalities in \cite{ChruscielLiWeinstein,KWcharge,SchoenZhou}, where the minimizer arises from extreme black holes. Here we will extend this method to the setting of nondegenerate black holes. The difficulty arises from the fact that the convexity property does not pass directly from $E$ to $\mathcal{I}$ since the energy is applied to singular maps. To get around this problem a cut-and-paste procedure is employed in which the
regularized map $\Psi$ is approximated by maps $\Psi_{\delta,\varepsilon}$ which
agree with $\Psi_k$ on certain domains.
More precisely, let $\delta,\varepsilon>0$ be
small parameters and set $\Omega_{\delta,\varepsilon}=\{\delta<r_{\pm}; \text{ }r<2/\delta;\text{ }
\rho>\varepsilon\}$ and $\mathcal{A}_{\delta,\varepsilon}=B_{2/\delta}\setminus
\Omega_{\delta,\varepsilon}$, where $B_{2/\delta}$ is the coordinate ball of radius $2/\delta$. Then $\Psi_{\delta,\varepsilon}=(\overline{U}_{\delta,\varepsilon},v_{\delta,\varepsilon},
\chi_{\delta,\varepsilon},\psi_{\delta,\varepsilon})$ will be constructed so that
\begin{equation}\label{454}
\operatorname{supp}(\overline{U}_{\delta,\varepsilon}-\overline{U}_{k})\subset B_{2/\delta},\text{ }\text{ }\text{ }\text{ }\text{ }
\operatorname{supp}(v_{\delta,\varepsilon}-v_{k},\chi_{\delta,\varepsilon}-\chi_{k},
\psi_{\delta,\varepsilon}-\psi_{k})\subset \Omega_{\delta,\varepsilon}.
\end{equation}
If $\tilde{\Psi}^{t}_{\delta,\varepsilon}$, $t\in[0,1]$ is a geodesic in $\mathbb{H}_{\mathbb{C}}^{2}$ connecting
$\tilde{\Psi}^{1}_{\delta,\varepsilon}=\tilde{\Psi}_{\delta,\varepsilon}$ and $\tilde{\Psi}^{0}_{\delta,\varepsilon}=\tilde{\Psi}_{k}$, then $\tilde{\Psi}^{t}_{\delta,\varepsilon}\equiv\Psi_{k}$ outside $B_{2/\delta}$ and
$v^{t}_{\delta,\varepsilon}=v_{k}, \chi^t_{\delta,\varepsilon}=\chi_{k}$, and $\psi^t_{\delta,\varepsilon}=\psi_k$
on a neighborhood of $\mathcal{A}_{\delta,\varepsilon}$. We then have that $\overline{U}^{t}_{\delta,\varepsilon}
=\overline{U}_{k}+t(\overline{U}_{\delta,\varepsilon}-\overline{U}_{k})$ on this domain. The fact that this expression is linear in $t$, together with convexity of the harmonic energy produces
\begin{equation}\label{066}
\frac{d^{2}}{dt^{2}}\mathcal{I}(\Psi^{t}_{\delta,\varepsilon})
\geq 2\int_{\mathbb{R}^{3}}|\nabla\operatorname{dist}_{\mathbb{H}_{\mathbb{C}}^{2}}
(\Psi_{\delta,\varepsilon},\Psi_{k})|^{2}dx.
\end{equation}
Furthermore, since $\Psi_{k}$ is a critical point it follows that
\begin{equation}\label{456}
\frac{d}{dt}\mathcal{I}(\Psi^{t}_{\delta,\varepsilon})|_{t=0}=0.
\end{equation}
The gap bound of Theorem \ref{inf} is then obtained by integrating \eqref{066}, applying a Sobolev inequality, and taking the limit as $\delta,\varepsilon\rightarrow 0$. Each of these steps will now be justified.
Repeated use of the asymptotics in Section \ref{sec3} will be made, sometimes implicitly without reference to a particular equation.

The following cut-off functions are needed to construct the approximations
$\Psi_{\delta,\varepsilon}$. Namely
\begin{equation}\label{068}
\varphi_{\delta}=\begin{cases}
0 & \text{ if $r_{\pm}\leq\delta$,} \\
|\nabla\varphi_{\delta}|\leq \frac{2}{\delta} &
\text{ if $\delta<r_{\pm}<2\delta$,} \\
1 & \text{ if $r_{\pm}\geq2\delta$,} \\
\end{cases}
\end{equation}
\begin{equation}\label{069}
\varphi_{\delta}^{1}=\begin{cases}
1 & \text{ if $r\leq\frac{1}{\delta}$,} \\
|\nabla\varphi_{\delta}^{1}|\leq 2\delta &
\text{ if $\frac{1}{\delta}<r<\frac{2}{\delta}$,} \\
0 & \text{ if $r\geq\frac{2}{\delta}$,} \\
\end{cases}
\end{equation}
\begin{equation}\label{070}
\phi_{\varepsilon}=\begin{cases}
0 & \text{ if $\rho\leq\varepsilon$,} \\
\frac{\log(\rho/\varepsilon)}{\log(\sqrt{\varepsilon}/
\varepsilon)} &
\text{ if $\varepsilon<\rho<\sqrt{\varepsilon}$,} \\
1 & \text{ if $\rho\geq\sqrt{\varepsilon}$.} \\
\end{cases}
\end{equation}
The first step deals with neighborhoods of the poles $p_{\pm}$.
Let
$\mathcal{F}_{\delta}(\Psi)=(\overline{U},v_{\delta},\chi_{\delta},\psi_{\delta})$
where
\begin{equation}\label{071}
(v_{\delta},\chi_{\delta},\psi_{\delta})=
(v_{k},\chi_{k},\psi_{k})
+\varphi_{\delta}(v-v_{k},\chi-\chi_{k},\psi-\psi_{k}),
\end{equation}
so that the potentials of $\mathcal{F}_{\delta}(\Psi)$ and $\Psi_{k}$ agree on $B_\delta(p_+)\cup B_\delta(p_-)$.

\begin{lemma}\label{noc2}
Suppose that $\Psi\equiv\Psi_{k}$ outside $B_{2/\delta}$, then $\lim_{\delta\rightarrow 0}\mathcal{I}(\mathcal{F}_{\delta}(\Psi))=\mathcal{I}(\Psi)$.
\end{lemma}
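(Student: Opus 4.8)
\textbf{Proof proposal for Lemma \ref{noc2}.}

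The plan is to show that $\mathcal{I}(\mathcal{F}_\delta(\Psi)) - \mathcal{I}(\Psi)\to 0$ by controlling the difference term by term over the small balls $B_{2\delta}(p_\pm)$ where $\varphi_\delta$ is not identically $1$, together with the contribution from the annular regions $\delta<r_\pm<2\delta$ where $\nabla\varphi_\delta\neq 0$. Since $\overline{U}$ is unchanged by $\mathcal{F}_\delta$, only the potential terms in $\mathcal{I}$ — those involving $\omega=dv+\chi d\psi-\psi d\chi$, $\nabla\chi$, and $\nabla\psi$ — are affected, and the difference $\mathcal{I}(\mathcal{F}_\delta(\Psi))-\mathcal{I}(\Psi)$ is supported in $B_{2\delta}(p_+)\cup B_{2\delta}(p_-)$. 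Both $\Psi$ and $\Psi_k$ satisfy the pole asymptotics \eqref{061.0}, \eqref{061.00}, \eqref{061}, \eqref{061.1}, and the conormal-type estimates \eqref{063}, \eqref{064} for $v-v_k,\chi-\chi_k,\psi-\psi_k$, so the interpolated potentials in \eqref{071} also satisfy these bounds (up to constants). The goal is then to verify that each of the three potential integrands of $\mathcal{I}$, evaluated on $\mathcal{F}_\delta(\Psi)$, is bounded by an integrable function on $B_{2\delta}(p_\pm)$ whose integral over that ball tends to $0$ as $\delta\to 0$.

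Concretely, near $p_+$ I would split into the two sectors $z\ge m_0$ and $z\le m_0$. In the first sector $e^{4U}=O(r_+^2)$, so the middle term of $\mathcal{I}$ involves $e^{4U}\rho^{-4}|\omega_\delta|^2$; using $|\omega_\delta|\lesssim |\omega|+|\varphi_\delta||\nabla\varphi_\delta|\,|v-v_k| + \dots$ together with $|\omega|=\rho^2 O(r_+^{-3/2})$, $|\nabla\varphi_\delta|\le 2/\delta$, and $|v-v_k|=O(\rho^2 r_+^{-1})$, one finds the integrand is $O(r_+^{-1})$ plus terms of size $O(\delta^{-2}\rho^2 r_+^{-2})$ on the shell $\{\delta<r_+<2\delta\}$; both are integrable and their integral over $B_{2\delta}(p_+)$ is $O(\delta^2)\to0$ (note $\rho\le r_+$ on that shell). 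The term $e^{2U}\rho^{-2}(|\nabla\chi_\delta|^2+|\nabla\psi_\delta|^2)$ with $e^{2U}=O(r_+)$ is handled the same way using \eqref{061} and \eqref{063}. In the sector $z\le m_0$ one has instead $e^{4U}=O(\rho^4 r_+^{-2})$ and $e^{2U}=O(\rho^2 r_+^{-1})$, and the relevant asymptotics are \eqref{061.0} ($|\omega|=O(r_+^{1/2})$), \eqref{061} ($|\nabla\chi|+|\nabla\psi|=O(1)$), and \eqref{064} ($|v-v_k|=O(r_+)$); the same bookkeeping gives integrands $O(r_+^{-1})$ plus shell contributions that integrate to $O(\delta)$ or better. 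The case of $p_-$ is symmetric via \eqref{061.00}, \eqref{061.1}, and the corresponding halves of \eqref{063}, \eqref{064}. Summing, $|\mathcal{I}(\mathcal{F}_\delta(\Psi))-\mathcal{I}(\Psi)| = o(1)$ as $\delta\to 0$.

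The main obstacle I anticipate is the bookkeeping on the transition shells $\{\delta<r_\pm<2\delta\}$: there the cut-off derivative contributes a factor $\delta^{-1}$ (or $\delta^{-2}$ after squaring) and one must check that this is compensated by the vanishing of $v-v_k$, $\chi-\chi_k$, $\psi-\psi_k$ at the poles at the rate given in \eqref{063}–\eqref{064}, which in turn was derived by integrating along radial or perpendicular lines. One has to be careful that the weights $e^{4U}\rho^{-4}$ and $e^{2U}\rho^{-2}$ are genuinely bounded (not merely integrable) on these shells so that the product with the $O(\delta^{-2})$ term, times the shell volume $O(\delta^3)$ in the $z\ge m_0$ sector or $O(\delta^3)$ with an extra $\rho^2$ gain in the $z\le m_0$ sector, actually tends to $0$; this is exactly where the separate treatment of the two sectors adjacent to each pole — reflecting the genuinely different blow-up rates of $U_0$ from \eqref{52}–\eqref{53} — is essential. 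Once the shell estimates are in hand, the interior estimate on $B_\delta(p_\pm)$ is immediate since $\mathcal{F}_\delta(\Psi)$ agrees with a map whose potentials equal those of $\Psi_k$ there and $\mathcal{I}$ is finite near the poles for both maps, so the difference over $B_\delta(p_\pm)$ is bounded by the (separately vanishing) tails $\int_{B_\delta(p_\pm)}$ of the two finite energies.
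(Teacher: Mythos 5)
Your proposal is correct and takes essentially the same route as the paper's proof: the same decomposition into the inner balls $\{r_\pm<\delta\}$ (where the potentials coincide with those of $\Psi_k$ and the finite-energy tails vanish), the transition shells $\{\delta<r_\pm<2\delta\}$ (where the expansion of $\omega_\delta$ in terms of $\nabla\varphi_\delta$ and the differences $v-v_k$, $\chi-\chi_k$, $\psi-\psi_k$ is controlled via \eqref{061.0}--\eqref{064}), and the exterior where $\varphi_\delta\equiv 1$. Your sector-by-sector power counting near each pole matches the paper's estimates \eqref{076}--\eqref{077}, so no further comment is needed.
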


\begin{proof}
Write
\begin{equation}\label{072}
\mathcal{I}(\mathcal{F}_{\delta}(\Psi))
=\sum_{\pm}\left[\mathcal{I}_{r_{\pm}<\delta}(\mathcal{F}_{\delta}(\Psi))
+\mathcal{I}_{\delta< r_{\pm}<2\delta}(\mathcal{F}_{\delta}(\Psi))
\right]+\mathcal{I}_{r_{\pm}>2\delta}(\mathcal{F}_{\delta}(\Psi)),
\end{equation}
where $r_{\pm}>2\delta$ denotes the complement of $B_{2\delta}(p_+)\cup B_{2\delta}(p_-)$. Then according to the dominated convergence theorem (DCT)
\begin{equation}\label{073}
\mathcal{I}_{r_{\pm}\geq2\delta}(\mathcal{F}_{\delta}(\Psi))
=\mathcal{I}_{r_{\pm}\geq2\delta}(\Psi)
\rightarrow \mathcal{I}(\Psi).
\end{equation}
Furthermore since the potentials of $\mathcal{F}_{\delta}(\Psi)$ and $\Psi_k$ agree on $r_{\pm}<\delta$, and $e^{U}\leq c e^{U_{k}}$
as $|\overline{U}|$ and $|\overline{U}_k|$ are bounded near $p_{\pm}$, the second and third integrands of
$\mathcal{I}_{r_{\pm}<\delta}(\mathcal{F}_{\delta}(\Psi))$ converge to zero in light of the finite reduced energy of $\Psi_k$. The first integrand involving $|\nabla\overline{U}|$ also tends to zero since this function remains bounded.

Now consider
\begin{equation}\label{074}
\mathcal{I}_{\delta< r_{\pm}<2\delta}(\mathcal{F}_{\delta}(\Psi))
=\underbrace{\int_{\delta< r_{\pm}<2\delta}|\nabla \overline{U}|^{2}}_{I_{1}}
+\underbrace{\int_{\delta< r_{\pm}<2\delta}
\frac{e^{4U}}{\rho^{4}}|\omega_{\delta}|^{2}}_{I_{2}}
+\underbrace{\int_{\delta< r_{\pm}<2\delta}
\frac{e^{2U}}{\rho^{2}}(|\nabla\chi_{\delta}|^{2}
+|\nabla\psi_{\delta}|^{2})}_{I_{3}},
\end{equation}
and note that $I_{1}\rightarrow 0$ by the DCT.
Next compute
\begin{align}\label{075}
\begin{split}
\omega_{\delta}=&\varphi_{\delta}\omega
+(1-\varphi_{\delta})\omega_{k}
+(v-v_{k})\nabla\varphi_{\delta}
+(\chi_{k}\psi-\psi_{k}\chi)\nabla\varphi_{\delta}
\\
&+\varphi_{\delta}(1-\varphi_{\delta})
[(\psi-\psi_{k})\nabla(\chi-\chi_{k})
-(\chi-\chi_{k})\nabla(\psi-\psi_{k})],
\end{split}
\end{align}
and use properties of the cut-off function to find
\begin{align}\label{076}
\begin{split}
I_{2}\leq& C\int_{\delta<r_{\pm}<2\delta}
\left(\frac{e^{4U}}{\rho^{4}}|\omega|^{2}
+\frac{e^{4U_{k}}}{\rho^{4}}|\omega_{k}|^{2}
+\frac{e^{4U}}{r_{\pm}^{2}\rho^{4}}|v-v_{k}|^{2}
+\frac{e^{4U}}{r_{\pm}^{2}\rho^{4}}|\chi_{k}\psi-\psi_{k}\chi|^{2}\right)\\
&+C\int_{\delta<r_{\pm}<2\delta}
\frac{e^{4U}}{\rho^{4}}\left(|\psi-\psi_{k}|^{2}|\nabla(\chi-\chi_{k})|^{2}
+|\chi-\chi_{k}|^{2}|\nabla(\psi-\psi_{k})|^{2}\right).
\end{split}
\end{align}
The first and second terms converge to zero by the DCT and finite reduced energies of $\Psi$ and $\Psi_{k}$. The third term may be estimated with the help of \eqref{063} and \eqref{064}, namely
\begin{equation}\label{077}
\int_{\delta<r_{\pm}<2\delta}
\frac{e^{4U}}{r_{\pm}^{2}\rho^{4}}|v-v_{k}|^{2}\leq \int_{\delta<r_{\pm}<2\delta}
Cr^{-2} \rightarrow 0,
\end{equation}
and similar considerations apply for the fourth term. For the fifth term employ \eqref{061}, \eqref{061.1}, \eqref{063}, and \eqref{064} to find
\begin{equation}
\int_{\delta<r_{\pm}<2\delta}
\frac{e^{4U}}{\rho^{4}}|\psi-\psi_{k}|^{2}|\nabla(\chi-\chi_{k})|^{2}
\leq \int_{\delta<r_{\pm}<2\delta}C\rightarrow 0,
\end{equation}
and similarly for the sixth term. This shows that $I_2\rightarrow 0$. Lastly,
analogous reasoning yields $I_3\rightarrow 0$.
\end{proof}

Consider now the asymptotically flat end and set
\begin{equation}
\mathcal{F}_{\delta}^{1}(\Psi)=\Psi_{k}
+\varphi_{\delta}^{1}(\Psi-\Psi_{k}),
\end{equation}
so that $\mathcal{F}_{\delta}^{1}(\Psi)=\Psi_{k}$ on $\mathbb{R}^{3}\setminus B_{2/\delta}$. Then as is shown in \cite[Lemma 4.2]{KWcharge}
\begin{equation}\label{con1}
\lim_{\delta\rightarrow 0}\mathcal{I}(\mathcal{F}_{\delta}^{1}(\Psi))=\mathcal{I}(\Psi).
\end{equation}
Next we treat the cylindrical regions around the axis and horizon rod, and will make use of the domains
\begin{equation}
\mathcal{C}_{\delta,\varepsilon}=
\{\rho<\varepsilon; \text{ }\delta< r_{\pm}; \text{ } r< 2/\delta\},
\end{equation}
\begin{equation}
\mathcal{W}_{\delta,\varepsilon}^1=
\{\varepsilon< \rho<\sqrt{\varepsilon}; \text{ }\delta< r_{\pm}; \text{ }r\leq 2/\delta;\text{ }|z|> m\},
\end{equation}
\begin{equation}
\mathcal{W}_{\delta,\varepsilon}^2=
\{\varepsilon< \rho<\sqrt{\varepsilon};\text{ }\delta< r_{\pm}; \text{ }|z|< m\}.
\end{equation}
Let
$\mathcal{G}_{\varepsilon}(\Psi)=(\overline{U},v_{\varepsilon},
\chi_{\varepsilon},\psi_{\varepsilon})$
where
\begin{equation}\label{080}
(v_{\varepsilon},
\chi_{\varepsilon},\psi_{\varepsilon})=
(v_{k},\chi_{k},\psi_{k})
+\phi_{\varepsilon}(v-v_{k},\chi-\chi_{k},\psi-\psi_{k}),
\end{equation}
so that the potentials of $\mathcal{G}_{\varepsilon}(\Psi)$  and $\Psi_{k}$ agree on $\rho<\varepsilon$.

\begin{lemma}\label{noc3}
Fix $\delta>0$. Assume that the potentials of $\Psi$ and $\Psi_{k}$ agree on $B_{\delta}(p_+)\cup B_{\delta}(p_-)$, and $\Psi\equiv\Psi_k$ outside $B_{2/\delta}$, then
$\lim_{\varepsilon\rightarrow 0}\mathcal{I}(\mathcal{G}_{\varepsilon}(\Psi))=\mathcal{I}(\Psi)$.
\end{lemma}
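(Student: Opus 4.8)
The plan is to mimic the structure of the proof of Lemma \ref{noc2}, but now localized to the cylindrical neighborhoods of the axis and of the horizon rod, where the cut-off $\phi_\varepsilon$ is logarithmic rather than linear. Since $\delta>0$ is fixed, everything takes place inside the compact region $B_{2/\delta}$, and the potentials of $\Psi$ and $\Psi_k$ already agree on $B_\delta(p_+)\cup B_\delta(p_-)$ and outside $B_{2/\delta}$. First I would decompose
\begin{equation}
\mathcal{I}(\mathcal{G}_\varepsilon(\Psi))
=\mathcal{I}_{\rho\geq\sqrt{\varepsilon}}(\mathcal{G}_\varepsilon(\Psi))
+\mathcal{I}_{\rho\leq\varepsilon}(\mathcal{G}_\varepsilon(\Psi))
+\mathcal{I}_{\varepsilon<\rho<\sqrt{\varepsilon}}(\mathcal{G}_\varepsilon(\Psi)),
\end{equation}
intersected throughout with $\{\delta<r_\pm,\ r<2/\delta\}$. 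On $\{\rho\geq\sqrt{\varepsilon}\}$ we have $\mathcal{G}_\varepsilon(\Psi)=\Psi$, so this term converges to $\mathcal{I}(\Psi)$ by the DCT. On $\{\rho\leq\varepsilon\}$ the potentials of $\mathcal{G}_\varepsilon(\Psi)$ coincide with those of $\Psi_k$, so the last two integrands there are dominated by those of $\mathcal{I}(\Psi_k)$ restricted to this shrinking cylinder, hence tend to zero by finite reduced energy of $\Psi_k$; the $|\nabla\overline{U}|^2$ term tends to zero by the DCT. The heart of the matter is therefore the transition region $\{\varepsilon<\rho<\sqrt{\varepsilon}\}$, which I would split into $\mathcal{W}^1_{\delta,\varepsilon}$ (the part with $|z|>m_0$, i.e. near the axis $\Gamma$) and $\mathcal{W}^2_{\delta,\varepsilon}$ (the part with $|z|<m_0$, i.e. near the interior of the horizon rod).

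On the transition region I would expand, exactly as in \eqref{075},
\begin{align}
\begin{split}
\omega_\varepsilon=&\phi_\varepsilon\omega+(1-\phi_\varepsilon)\omega_k
+(v-v_k)\nabla\phi_\varepsilon+(\chi_k\psi-\psi_k\chi)\nabla\phi_\varepsilon\\
&+\phi_\varepsilon(1-\phi_\varepsilon)[(\psi-\psi_k)\nabla(\chi-\chi_k)-(\chi-\chi_k)\nabla(\psi-\psi_k)],
\end{split}
\end{align}
and likewise $\nabla\chi_\varepsilon=\phi_\varepsilon\nabla\chi+(1-\phi_\varepsilon)\nabla\chi_k+(\chi-\chi_k)\nabla\phi_\varepsilon$ and similarly for $\psi_\varepsilon$. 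The terms not involving $\nabla\phi_\varepsilon$ are handled by the DCT and the finite reduced energies of $\Psi$ and $\Psi_k$, since the transition region has measure $O(\varepsilon|\log\varepsilon|)$ (a thin annulus in $\rho$, of bounded extent in $z$). The genuinely new terms are those carrying $|\nabla\phi_\varepsilon|$. Here one uses $|\nabla\phi_\varepsilon|\leq C/(\rho|\log\varepsilon|)$, together with the key fact that on $\mathcal{W}^1_{\delta,\varepsilon}$, away from the poles on the axis, $|v-v_k|$, $|\chi-\chi_k|$, $|\psi-\psi_k|$ all vanish to appropriate order as $\rho\to0$ (they are $O(\rho^2)$ or better by \eqref{057}, \eqref{059}, \eqref{060} combined with the boundary-value agreement, and $e^{U}=O(\rho)$ away from the axis), so that $\frac{e^{4U}}{\rho^4}|v-v_k|^2|\nabla\phi_\varepsilon|^2$ and its cousins are bounded by $C/|\log\varepsilon|^2$ times a bounded function, integrated over a set of measure $O(\varepsilon|\log\varepsilon|)$; these go to zero. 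On $\mathcal{W}^2_{\delta,\varepsilon}$ one instead uses $e^U=O(\rho)$ near the interior of the horizon rod and the bounds \eqref{062}, so that $\frac{e^{4U}}{\rho^4}\sim \rho^0$ and the potential differences are merely $O(1)$; the worst term is then $\frac{e^{4U}}{\rho^4}|v-v_k|^2|\nabla\phi_\varepsilon|^2 \leq C/(\rho^2|\log\varepsilon|^2)$, whose integral over $\{\varepsilon<\rho<\sqrt\varepsilon,\ |z|<m_0\}$ is $O(1/|\log\varepsilon|)\to0$. The $I_3$-type terms with $|\nabla\chi_\varepsilon|^2,|\nabla\psi_\varepsilon|^2$ are controlled the same way.

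I expect the main obstacle to be the horizon-rod region $\mathcal{W}^2_{\delta,\varepsilon}$: unlike the axis case treated in \cite{KWcharge}, the potentials of $\Psi$ and $\Psi_k$ are not assumed to agree on the horizon rod, so one cannot gain the extra powers of $\rho$ that made the axis terms decay faster; one must rely solely on the logarithmic gain $|\log\varepsilon|^{-1}$ from the cut-off and on the precise rate $e^{U}=O(\rho)$, verifying carefully that the coefficient $e^{4U}/\rho^4$ stays bounded (equivalently $e^{4\overline U}$ bounded near the horizon rod, which holds by Lemma \ref{renormalized} away from the poles, and near the poles the potentials already agree so the term is absent). A secondary technical point is to confirm that the first integrand $|\nabla\overline U|^2$ is unaffected by $\mathcal{G}_\varepsilon$ — indeed $\mathcal{G}_\varepsilon$ leaves $\overline U$ untouched — so its contribution over the transition and inner regions vanishes purely by the DCT. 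Assembling the three pieces gives $\lim_{\varepsilon\to0}\mathcal{I}(\mathcal{G}_\varepsilon(\Psi))=\mathcal{I}(\Psi)$.
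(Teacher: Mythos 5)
Your proposal is correct and follows essentially the same route as the paper: the same decomposition into the inner cylinder, the two transition regions $\mathcal{W}^{1}_{\delta,\varepsilon}$ and $\mathcal{W}^{2}_{\delta,\varepsilon}$, and the complement, the same expansion of $\omega_{\varepsilon}$, and the same decisive estimate $\int_{\mathcal{W}^{2}_{\delta,\varepsilon}}(\log\varepsilon)^{-2}\rho^{-2}=O\left((\log\varepsilon)^{-1}\right)$ on the horizon-rod strip, which is exactly \eqref{086}. The only cosmetic difference is that you treat the axis region $\mathcal{W}^{1}_{\delta,\varepsilon}$ by a direct computation where the paper cites Lemma 4.4 of \cite{KWcharge}, and your measure count $O(\varepsilon|\log\varepsilon|)$ for the transition annulus is a harmless overestimate of the actual $O(\varepsilon)$.
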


\begin{proof}
Write
\begin{equation}\label{081}
\mathcal{I}(\mathcal{G}_{\varepsilon}(\Psi))
=\mathcal{I}_{\mathcal{C}_{\delta,\varepsilon}}(\mathcal{G}_{\varepsilon}(\Psi))
+\mathcal{I}_{\mathcal{W}_{\delta,\varepsilon}^1}(\mathcal{G}_{\varepsilon}(\Psi))
+\mathcal{I}_{\mathcal{W}_{\delta,\varepsilon}^2}(\mathcal{G}_{\varepsilon}(\Psi))
+\mathcal{I}_{\mathbb{R}^{3}\setminus(\mathcal{C}_{\delta,\varepsilon}
\cup\mathcal{W}_{\delta,\varepsilon}^1\cup\mathcal{W}_{\delta,\varepsilon}^2)}
(\mathcal{G}_{\varepsilon}(\Psi)).
\end{equation}
Since the potentials of
$\Psi$ and $\Psi_{k}$ agree on $B_{\delta}(p_{\pm})$, the DCT and finite reduced energy imply that
\begin{equation}\label{082}
\mathcal{I}_{\mathbb{R}^{3}\setminus(\mathcal{C}_{\delta,\varepsilon}
\cup\mathcal{W}_{\delta,\varepsilon}^1\cup\mathcal{W}_{\delta,\varepsilon}^2)}
(\mathcal{G}_{\varepsilon}(\Psi))
\rightarrow \mathcal{I}(\Psi).
\end{equation}
Furthermore since the potentials of $\mathcal{G}_{\varepsilon}(\Psi)$ and $\Psi_k$ agree on $\mathcal{C}_{\delta,\varepsilon}$, and $e^{U}\leq c e^{U_{k}}$
on this region, the second and third integrands of
$\mathcal{I}_{\mathcal{C}_{\delta,\varepsilon}}(\mathcal{G}_{\varepsilon}(\Psi))$ converge to zero in light of the finite reduced energy of $\Psi_k$. The first integrand involving $|\nabla\overline{U}|$ also tends to zero since this function remains bounded.

The domain $\mathcal{W}_{\delta,\varepsilon}^1$ concerns a neighborhood of the axis of rotation, and therefore $\mathcal{I}_{\mathcal{W}_{\delta,\varepsilon}^1}(\mathcal{G}_{\varepsilon}(\Psi))
\rightarrow 0$ according to Lemma 4.4 of \cite{KWcharge}. Now consider
\begin{equation}\label{083}
\mathcal{I}_{\mathcal{W}_{\delta,\varepsilon}^2}(\mathcal{G}_{\varepsilon}(\Psi))
=\underbrace{\int_{\mathcal{W}_{\delta,\varepsilon}^2}|\nabla \overline{U}|^{2}}_{I_{1}}
+\underbrace{\int_{\mathcal{W}_{\delta,\varepsilon}^2}
\frac{e^{4U}}{\rho^{4}}|\omega_{\varepsilon}|^{2}}_{I_{2}}
+\underbrace{\int_{\mathcal{W}_{\delta,\varepsilon}^2}
\frac{e^{2U}}{\rho^{2}}(|\nabla\chi_{\varepsilon}|^{2}
+|\nabla\psi_{\varepsilon}|^{2})}_{I_{3}},
\end{equation}
and notice that $I_{1}\rightarrow 0$ since $|\nabla\overline{U}|$ remains bounded.
Next observe that
\begin{align}\label{084}
\begin{split}
\omega_{\varepsilon}=&\phi_{\varepsilon}\omega
+(1-\phi_{\varepsilon})\omega_{k}
+(v-v_{k})\nabla\phi_{\varepsilon}
+(\chi_{k}\psi-\psi_{k}\chi)\nabla\phi_{\varepsilon}
\\
&+\phi_{\varepsilon}(1-\phi_{\varepsilon})
[(\psi-\psi_{k})\nabla(\chi-\chi_{k})
-(\chi-\chi_{k})\nabla(\psi-\psi_{k})].
\end{split}
\end{align}
The asymptotics of the cut-off function then yield
\begin{align}\label{085}
\begin{split}
I_{2}\leq& C\int_{\mathcal{W}_{\delta,\varepsilon}^2}
\left(\frac{e^{4U}}{\rho^4}|\omega|^{2}
+\frac{e^{4U_k}}{\rho^4}|\omega_{k}|^{2}
+(\log\varepsilon)^{-2}\rho^{-2}|v-v_{k}|^{2}
+(\log\varepsilon)^{-2}\rho^{-2}|\chi_{k}\psi-\psi_{k}\chi|^{2}\right)\\
&+C\int_{\mathcal{W}_{\delta,\varepsilon}^2}
\left(|\psi-\psi_{k}|^{2}|\nabla(\chi-\chi_{k})|^{2}
+|\chi-\chi_{k}|^{2}|\nabla(\psi-\psi_{k})|^{2}\right).
\end{split}
\end{align}
The first two terms converge to zero by the finite reduced energies. Furthermore according to \eqref{062}, $|v-v_k|=O(1)$ and thus
\begin{equation}\label{086}
\int_{\mathcal{W}_{\delta,\varepsilon}^2}
(\log\varepsilon)^{-2}|v-v_{k}|^{2}
\leq C\int_{\mathcal{W}_{\delta,\varepsilon}^2}
(\log\varepsilon)^{-2}\rho^{-2}= O\left((\log\varepsilon)^{-1}\right)\rightarrow 0.
\end{equation}
Analogous considerations may be used to treat the fourth term. Lastly, since
$|\psi-\psi_{k}|$ and $|\nabla(\chi-\chi_{k})|$ remain bounded the fifth term
tends to zero, and similarly for the sixth.
\end{proof}

We are now in a position to construct the appropriate approximation to $\Psi$ via the cut and paste operations by composition
\begin{equation}\label{4105}
\Psi_{\delta,\varepsilon}
=\mathcal{G}_{\varepsilon}\left(\mathcal{F}_{\delta}\left(
\mathcal{F}_{\delta}^{1}(\Psi)\right)\right).
\end{equation}
Then according to \eqref{con1} and Lemmas \ref{noc2} and \ref{noc3},
\begin{equation}\label{319}
\lim_{\delta\rightarrow 0}\lim_{\varepsilon\rightarrow 0}
\mathcal{I}(\Psi_{\delta,\varepsilon})=\mathcal{I}(\Psi).
\end{equation}

\noindent\textit{Proof of Theorem \ref{inf}.}
As in the introduction to this section let $\tilde{\Psi}^{t}_{\delta,\varepsilon}$ be the geodesic deformation connecting $\tilde{\Psi}_k$ to $\tilde{\Psi}_{\delta,\varepsilon}$. Due to the properties of the approximation the first component of the geodesic is
$\overline{U}^{t}_{\delta,\varepsilon}=\overline{U}_{k}
+t(\overline{U}_{\delta,\varepsilon}-\overline{U}_{k})$ on $\mathcal{A}_{\delta,\varepsilon}$, and in particular $\operatorname{dist}_{\mathbb{H}_{\mathbb{C}}^{2}}(\Psi_{\delta,\varepsilon},\Psi_{k})
=|\overline{U}_{\delta,\varepsilon}-\overline{U}_{k}|$ on this domain. These two observations, together with the asymptotics near the poles $p_{\pm}$ show that one may differentiate under the integral sign to directly compute the second variation
and find
\begin{equation}
\frac{d^{2}}{dt^{2}}\mathcal{I}_{\mathcal{A}_{\delta,\varepsilon}}(\Psi^{t}_{\delta,\varepsilon})
\geq\int_{\mathcal{A}_{\delta,\varepsilon}}2|\nabla(\overline{U}_{\delta,\varepsilon}
-\overline{U}_k)|^2=\int_{\mathcal{A}_{\delta,\varepsilon}}2|\nabla
\operatorname{dist}_{\mathbb{H}_{\mathbb{C}}^{2}}(\Psi_{\delta,\varepsilon},\Psi_{k})|^2.
\end{equation}
On the domain $\Omega_{\delta,\varepsilon}$, the relation \eqref{51} between reduced and harmonic energies may be used. Due to the linearity of $\overline{U}^{t}_{\delta,\varepsilon}$ in $t$, the boundary term of \eqref{51} vanishes when computing the second variation so that
\begin{equation}
\frac{d^{2}}{dt^{2}}\mathcal{I}_{\Omega_{\delta,\varepsilon}}(\Psi^{t}_{\delta,\varepsilon})
=\frac{d^{2}}{dt^{2}}E_{\Omega_{\delta,\varepsilon}}
(\tilde{\Psi}^{t}_{\delta,\varepsilon})
\geq\int_{\Omega_{\delta,\varepsilon}}2|\nabla
\operatorname{dist}_{\mathbb{H}_{\mathbb{C}}^{2}}(\Psi_{\delta,\varepsilon},\Psi_{k})|^2,
\end{equation}
where the inequality is obtained from the convexity of harmonic energy \cite{SchoenZhou}. Since $\Omega_{\delta,\varepsilon}$ and $\mathcal{A}_{\delta,\varepsilon}$ are complementary in $B_{2/\delta}$, and the geodesic deformation is constant outside of this large ball, it follows that \eqref{066} holds.

Next, let $\bar{\delta}<\delta$ and $\bar{\varepsilon}<\varepsilon$, and observe
that since $\Psi_k$ is a critical point
\begin{equation}\label{alsk}
\frac{d}{dt}\mathcal{I}_{\Omega_{\bar{\delta},\bar{\varepsilon}}}
(\Psi^{t}_{\delta,\varepsilon})|_{t=0}
=-\sum_{\pm}\int_{\partial B_{\bar{\delta}}(p_{\pm})}2(\overline{U}_{\delta,\varepsilon}-\overline{U}_{k})
\partial_{\nu}\overline{U}_{k}
-\int_{\partial\mathcal{C}_{\bar{\delta},\bar{\varepsilon}}}
2(\overline{U}_{\delta,\varepsilon}-\overline{U}_{k})\partial_{\nu}\overline{U}_{k},
\end{equation}
where $\nu$ is the unit normal pointing towards infinity. In addition, using the
constancy of the potentials and linearity of $\overline{U}^{t}_{\delta,\varepsilon}$ on $\mathcal{A}_{\bar{\delta},\bar{\varepsilon}}$ we find that
\begin{align}\label{qpwo}
\begin{split}
\frac{d}{dt}\mathcal{I}_{\mathcal{A}_{\bar{\delta},\bar{\varepsilon}}}
(\Psi^{t}_{\delta,\varepsilon})|_{t=0}=&\int_{\mathcal{A}_{\bar{\delta},\bar{\varepsilon}}}
2\nabla \overline{U}_{k}\cdot\nabla(\overline{U}_{\delta,\varepsilon}-\overline{U}_{k})
+4(\overline{U}_{\delta,\varepsilon}-\overline{U}_{k})\frac{e^{4U_k}}{\rho^{4}}
|\omega_{k}|^{2}\\
&+\int_{\mathcal{A}_{\bar{\delta},\bar{\varepsilon}}} 2(\overline{U}_{\delta,\varepsilon}-\overline{U}_{k})
\frac{e^{2U_k}}
{\rho^{2}}\left(|\nabla \chi_{k}|^{2}+|\nabla \psi_{k}|^{2}\right).
\end{split}
\end{align}
Since $|\overline{U}|+|\nabla\overline{U}|$ is uniformly bounded, \eqref{alsk} tends to zero as $\bar{\varepsilon}\rightarrow 0$ followed by $\bar{\delta}\rightarrow 0$, and the same holds for \eqref{qpwo} since it may be estimated by the reduced energy of $\Psi_k$ on $\mathcal{A}_{\bar{\delta},\bar{\varepsilon}}$.

We may now integrate \eqref{066} two times and use a Sobolev inequality to obtain
the inequality \eqref{10293} of Theorem \ref{inf} with $\Psi$ replaced by $\Psi_{\delta,\varepsilon}$. In light of \eqref{319}, the desired result follows by
taking the limits as $\varepsilon\rightarrow 0$ and then $\delta\rightarrow 0$.
\hfill\qedsymbol\medskip

\section{Proof of the Main Results}
\label{sec5} \setcounter{equation}{0}
\setcounter{section}{5}

We first show that under the assumptions of Theorem \ref{maintheorem}
the potentials and quantities arising from Weyl coordinates satisfy
the asymptotics stated in Section \ref{sec3}. Lemma \ref{renormalized} guarantees
that $\overline{U}$ behaves in a manner consistent with \eqref{055}.
Next, as is shown in \cite{KWcharge}
\begin{equation}
\frac{e^{6U-2\alpha}}{\rho^{4}}|\nabla v
+\chi\nabla\psi-\psi\nabla\chi|^{2}\leq |k|^2_g.
\end{equation}
Consider a domain near the poles $p_{\pm}$ with $|z|\geq m_0$, then using \eqref{52}-\eqref{54} we find that
\begin{equation}
|\nabla v
+\chi\nabla\psi-\psi\nabla\chi|=O(\rho^{2}e^{-2U}e^{-U+\alpha})=O(\rho^2 r_{\pm}^{-3/2}),
\end{equation}
since $|k|_{g}$ remains bounded. Similarly if $|z|\leq m_0$
\begin{equation}
|\nabla v
+\chi\nabla\psi-\psi\nabla\chi|=O(\rho^{2}e^{-2U}e^{-U+\alpha})=O(r_{\pm}^{1/2}),
\end{equation}
which confirms \eqref{061.0} and \eqref{061.00}. Near the horizon rod away from the poles, that is $|z|<m_0$, the asymptotics \eqref{lll} imply
\begin{equation}
|\nabla v
+\chi\nabla\psi-\psi\nabla\chi|=O(\rho^{2}e^{-2U}e^{-U+\alpha})=O(1),
\end{equation}
confirming part of \eqref{062}.

For the electromagnetic potentials recall that from \cite{KWcharge},
\begin{equation}
\frac{e^{4U-2\alpha}}{\rho^{2}}\left(|\nabla\chi|^{2}
+|\nabla\psi|^{2}\right)\leq |E|^2_g +|B|^2_g.
\end{equation}
Again the right-hand side is bounded near the poles, so for $|z|\geq m_0$ we have
\begin{equation}
|\nabla\chi|+|\nabla\psi|=O(\rho e^{-U}e^{-U+\alpha})=O(\rho r_{\pm}^{-1}),
\end{equation}
and for $|z|\leq m_0$
\begin{equation}
|\nabla\chi|+|\nabla\psi|=O(\rho e^{-U}e^{-U+\alpha})=O(1).
\end{equation}
This shows that \eqref{061} and \eqref{061.1} hold.
Analogously, near the horizon rod with $|z|<m_0$
\begin{equation}
|\nabla \chi|+|\nabla \psi|=O(1),
\end{equation}
which fulfills \eqref{062}. Furthermore the asymptotics in a neighborhood of the axis, \eqref{057} and \eqref{059}, may be obtained in similar fashion. Lastly, \eqref{056} and \eqref{058} follow from asymptotic flatness.

We are now in a position to establish Theorem \ref{maintheorem}. As shown above, the map $\Psi$ arising from the initial data satisfies the hypotheses of Theorem \ref{inf}. Therefore, together with \eqref{26} the following lower bound for the mass is achieved
\begin{equation}\label{mmmmm}
m\geq\frac{1}{8\pi}\mathcal{I}(\Psi_k)
+\frac{1}{4}\int_{-m_0}^{m_0}(\overline{\alpha}(0,z)-2\overline{U}(0,z))dz
+m_0.
\end{equation}
Let $m_k$ and $A_k$ denote the mass and horizon area of the Kerr-Newman solution
associated with the map $\Psi_k$. Then since the Kerr-Newman solution is known to
saturate the Penrose inequality
\begin{align}\label{4124}
\begin{split}
m_{k}=&\sqrt{\frac{A_{k}}{16\pi}+\frac{Q^2}{2}+\frac{\pi(Q^4+4\mathcal{J}^2)}{A_{k}}} \\
=&\frac{1}{8\pi}\mathcal{I}(\Psi_k)
+\frac{1}{4}\int_{-m_0}^{m_0}(\overline{\alpha}_{k}(0,z)-2\overline{U}_{k}(0,z))dz
+m_0.
\end{split}
\end{align}
It follows that
\begin{equation}\label{hhhh}
m\geq   \sqrt{\frac{A_{k}}{16\pi}+\frac{Q^2}{2}+\frac{\pi(Q^4+4\mathcal{J}^2)}{A_{k}}} +\frac{1}{4}\int_{-m_0}^{m_0}(\overline{\beta}(0,z)-\overline{\beta}_k(0,z))dz,
\end{equation}
which is the desired inequality. In the case that this inequality is saturated we must have $\Psi=\Psi_k$ by Theorem \ref{inf}. Several other quantities arising from the derivation of \eqref{26} vanish, from which it may be shown that the initial data $(M,g,k)$ agrees with that of the canonical slice of the Kerr-Newman spacetime; details are given in \cite[Section 2]{KWcharge}.

We will now establish Corollary \ref{corollary}. If $\overline{\beta}$ is constant on the horizon rod then
\begin{equation}
e^{\frac{1}{2m_0}\int_{-m_0}^{m_0}
\overline{\beta}(0,z)dz}
=\frac{1}{2m_0}\int_{-m_0}^{m_0}e^{\overline{\beta}(0,z)}dz=\frac{A}{16\pi m_0^2}.
\end{equation}
The same equality holds for $\beta$, $A$ replaced by $\beta_k$, $A_k$ since $\beta_k$ is also constant on the horizon. Therefore if we assume that $A\geq A_k$, then
\begin{equation}
\int_{-m_0}^{m_0}\overline{\beta}(0,z)dz\geq\int_{-m_0}^{m_0}
\overline{\beta}_k(0,z)dz,
\end{equation}
which together with \eqref{hhhh} yields the desired inequality. The case of equality here is treated as above.
\hfill\qedsymbol\medskip

\appendix



\section{Weyl Coordinates}
\label{sec7} \setcounter{equation}{0}

Here we prove Lemma \ref{renormalized}. In \cite{ChruscielNguyen} the existence of Weyl coordinates was established by first constructing so called
pseudospherical coordinates $(\rho_{s},z_{s},\phi)$, in which the initial data boundary $\partial M$ is represented by a semi-circle of radius $\frac{m_0}{2}$ about the origin in the $\rho_s z_s$-plane. This contrasts with Weyl coordinates in which the boundary takes the form of an interval on the $z$-axis in the orbit space. Pseudospherical coordinates are valid on the planar region $\mathbb{C}_{+}\setminus D_{m_0/2}=\{\rho_s+iz_s\mid \rho_{s}>0, r_s>m_0/2\}$, where $r_s^2=\rho_s^2+z_s^2$. In these coordinates the metric takes the standard `Brill' form
\begin{equation}
g=e^{-2U_{s}+2\alpha_{s}}(d\rho_{s}^{2}+dz_{s}^{2})+\rho_s^{2}e^{-2U_{s}}
(d\phi+A_{\rho_s} d\rho_{s}+A_{z_s} dz_{s})^{2}.
\end{equation}
This structure for the metric is preserved under any coordinate change of the plane which yields a conformal transformation, and Weyl coordinates are a particular example of this. The metric coefficients are axisymmetric, smooth up to the boundary in $\mathbb{C}_{+}\setminus D_{m_0/2}$ with $\alpha_s=0$ on the $z_s$-axis, and satisfy the fall-off
\begin{equation}
U_{s}=O_{1}(r_{s}^{-1/2-\epsilon}),\quad
\alpha_{s}=O_{1}(r_{s}^{-1/2-\epsilon}),
\quad A_{\rho_s}=O_{1}(r_{s}^{-3/2-\epsilon}),\quad
A_{z_s}=O_{1}(r_{s}^{-3/2-\epsilon}).
\end{equation}

Weyl coordinates $(\rho,z,\phi)$ are constructed from pseudospherical coordinates as follows. Define complex coordinates $\zeta_{s}=\rho_{s}+iz_{s}$
and $\zeta=\rho+iz$ and consider the holomorphic diffeomorphism $f:\mathbb{C}_+\setminus D_{m_0/2}\to\mathbb{C}_+$ given by
\begin{equation}
\zeta=f(\zeta_{s})=\zeta_{s}-\frac{m_0^{2}}{4\zeta_{s}}\quad\quad\quad\Rightarrow
\quad\quad\quad\rho=\frac{\rho_{s}(r_{s}^{2}-\frac{m_0^{2}}{4})}{r_{s}^{2}},\quad
z=\frac{z_{s}(r_{s}^{2}+\frac{m_0^{2}}{4})}{r_{s}^{2}}.
\end{equation}
Observe that
\begin{equation}
\frac{\partial{\zeta}}{\partial{\zeta_s}}=1+\frac{m_0^2}{4\zeta_s^2},
\end{equation}
which is smooth up to the boundary of $\mathbb{C}_+\setminus D_{m_0/2}$ and is nonzero except at the points $\zeta_s=\pm \frac{m_0}{2}i$. Thus by the inverse function theorem, the inverse transformation is holomorphic and has bounded derivatives away from the poles $\zeta=\pm m_0i$ of the horizon. Near these points we have
\begin{equation}
\left|\frac{\partial{\zeta}}{\partial{\zeta_s}}\right|\geq C^{-1}|\zeta_s \mp\frac{m_0}{2}i|\quad\quad\Rightarrow\quad\quad
\left|\frac{\partial{\zeta_s}}{\partial{\zeta}}\right|\leq \frac{C}{|\zeta_s \mp\frac{m_0}{2}i|}.
\end{equation}
In particular, all first derivatives of the real and imaginary parts admit
the bound
\begin{equation}\label{qeqe}
\left|\frac{\partial{\rho_s}}{\partial{\rho}}\right|+
\left|\frac{\partial{\rho_s}}{\partial{z}}\right|+
\left|\frac{\partial{z_s}}{\partial{\rho}}\right|+
\left|\frac{\partial{z_s}}{\partial{z}}\right| \leq \frac{C}{|\zeta_s \mp\frac{m_0}{2}i|}
\end{equation}
near the poles.

The relationship between $U$, $\alpha$ of Weyl coordinates and $U_s$, $\alpha_s$ of pseudospherical coordinates is given by \cite{ChruscielNguyen}
\begin{equation}
U(\rho,z)=U_{s}(\rho_{s},z_{s})-\log\frac{\rho_{s}}{\rho},\quad\quad
\alpha(\rho,z)=\alpha_{s}(\rho_s,z_s)
+\log\frac{|\zeta_{s}|^{2}-\frac{m_0^{2}}{4}}{|\zeta_{s}^{2}+\frac{m_0^{2}}{4}|}.
\end{equation}
Note that the second term on the right-hand side of both expressions depends only on the coordinate transformation. For the Schwarzschild solution
\begin{equation}
U_{s,0}(\rho_{s},z_{s})=-2\log\frac{2r_{s}+m_0}{2r_{s}},\quad\quad
\alpha_{s,0}(\rho_s,z_s)=0,
\end{equation}
and the expressions for the Schwarzschild data $U_0$ and $\alpha_0$ in Weyl coordinates may then be obtained from the above formulas.
We may then write $U=U_{0}+\overline{U}$ and $\alpha=\alpha_{0}+\overline{\alpha}$ where
\begin{equation}
\overline{U}(\rho,z):=U(\rho,z)-U_{0}(\rho,z)
=U_{s}(\rho_{s},z_{s})-U_{s,0}(\rho_{s},z_{s}),
\end{equation}
and
\begin{equation}
\overline{\alpha}(\rho,z):=\alpha(\rho,z)-\alpha_{0}(\rho,z)
=\alpha_{s}(\rho_{s},z_{s}).
\end{equation}
It immediately follows that $\overline{U}$ and $\overline{\alpha}$ are uniformly
bounded and satisfy the desired decay at infinity.
Furthermore since $U_s$, $U_{s,0}$, and $\alpha_s$ are smooth, the regularity properties of $\overline{U}$ and $\overline{\alpha}$ depend on the coordinate transformation $f^{-1}$, and the only possible issues arise at the poles.

Consider the partial derivative
\begin{equation}\label{gfgf}
\frac{\partial\overline{U}}{\partial\rho}=\left(\frac{\partial U_{s}}{\partial\rho_{s}}-\frac{\partial U_{s,0}}{\partial\rho_{s}}\right)\frac{\partial{\rho_{s}}}{\partial{\rho}}
+\left(\frac{\partial U_{s}}{\partial{z_{s}}}-\frac{\partial U_{s,0}}{\partial{z_{s}}}\right)\frac{\partial{z_{s}}}{\partial{\rho}}.
\end{equation}
Since the horizon is a minimal surface
\begin{equation}
\frac{\partial}{\partial r_{s}}(U_{s}-\frac{1}{2}\alpha_{s})
=\frac{2}{m_0}=\frac{\partial U_{s,0}}{\partial r_{s}}\quad\quad
\text{ when }r_s=\frac{m_0}{2}.
\end{equation}
In particular this holds at $(\rho_{s},z_{s})=(0,\pm m_0/2)$. Moreover,
since $\alpha_{s}=0$ on the axis and $\partial_{r_s}$ coincides with $\pm\partial_{z_{s}}$ there, we have
\begin{equation}
\left(\frac{\partial U_{s}}{\partial z_{s}}-\frac{\partial U_{s,0}}{\partial z_{s}}\right)\left(0,\pm \frac{m_0}{2}\right)=0.
\end{equation}
Next, use the fact that all functions are axisymmetric to find
\begin{equation}
\frac{\partial U_{s}}{\partial\rho_{s}}\left(0,\pm\frac{m_0}{2}\right)
=\frac{\partial U_{s,0}}{\partial\rho_{s}}\left(0,\pm\frac{m_0}{2}\right)=0.
\end{equation}
Therefore the first derivatives of $U_{s}-U_{s,0}$ vanish at the poles. This, combined with the smoothness of this function up to the boundary, shows that
even though $\partial_{\rho}\rho_s$ and $\partial_{\rho}z_s$ may blow-up at these points in a manner controlled by \eqref{qeqe}, the full expression \eqref{gfgf} remains bounded. Similar considerations may be used to treat the $\partial_z \overline{U}$ and the derivatives of $\overline{\alpha}$.

\section{Relation of $\overline{\beta}$ to Surface Gravity}
\label{sec8} \setcounter{equation}{0}

Here we compute $\overline{\beta}=\overline{\alpha}-2\overline{U}$ on the horizon rod for the Kerr black hole. Let us recall the constant time slice Kerr metric $g_{kerr}$ in Weyl coordinates \cite{Stephani}. We will denote the mass and angular momentum of the Kerr metric by $m$ and $\mathcal{J}=ma$, while the notation for half the horizon rod length will be $m_0$. Then
\begin{equation}
g_{kerr}=e^{-2U_{kerr}+2\alpha_{kerr}}(d\rho^{2}+dz^{2})+\rho^{2}e^{-2U_{kerr}}d\phi^{2},
\end{equation}
where
\begin{equation}
e^{-2U_{kerr}+2\alpha_{kerr}}
=\frac{m_0^{2}\left(r_{+}+r_{-}+2m\right)^{2}+a^{2}(r_{+}-r_{-})^{2}}
{4m_0^{2}r_{+}r_{-}},
\end{equation}
\begin{align}
\begin{split}
&\rho^{2}e^{-2U_{kerr}}\\
=&
\frac{m_0^{2}\left(r_{+}+r_{-}+2m\right)^{2}+a^{2}(r_{+}-r_{-})^{2}}
{m_0^{2}\left((r_{+}+r_{-})^{2}-4m^{2}\right)+a^{2}(r_{+}-r_{-})^{2}}\rho^{2}\\
&-\frac{\left[am(r_{+}+r_{-}+2m)(4m_0^{2}-(r_{+}-r_{-})^{2})\right]^{2}}
{\left[m_0^{2}\left((r_{+}+r_{-})^{2}-4m^{2}\right)+a^{2}(r_{+}-r_{-})^{2}\right]
\left[m_0^{2}\left(r_{+}+r_{-}+2m\right)^{2}+a^{2}(r_{+}-r_{-})^{2}\right]},
\end{split}
\end{align}
with $r_{\pm}=\sqrt{\rho^{2}+(z\pm m_0)^{2}}$. Write $U_{kerr}=U_0+\overline{U}_{kerr}$ and $\alpha_{kerr}=\alpha_0+\overline{\alpha}_{kerr}$, where $U_0$ and $\alpha_0$ are the corresponding Schwarzschild functions. It follows that for $|z|<m_0$ we have
\begin{equation}
\overline{U}_{kerr}(0,z)=-\frac{1}{2}
\log\left(\frac{m^{2}(m+m_{0})^{2}}{m_0^{2}(m+m_{0})^{2}+a^{2}z^{2}}\right),
\end{equation}
and
\begin{equation}
\overline{\alpha}_{kerr}(0,z)=\frac{1}{2}\log\frac{\left[m_0^{2}(m+m_{0})^{2}+a^{2}z^{2}\right]^{2}}
{4m_0^{4}m^{2}(m+m_{0})^{2}}.
\end{equation}
Notice that $\mathcal{J}=0$ implies that $\overline{U}_{kerr}(0,z)=\overline{\alpha}_{kerr}(0,z)=0$ as expected, since half the horizon rod length is given by
\begin{equation}
m_0=\sqrt{m^{2}-a^{2}}
=\sqrt{m^2-\frac{\mathcal{J}^{2}}{m^{2}}}.
\end{equation}
We now have that on the horizon rod
\begin{equation}
\overline{\beta}_{kerr}(0,z)=\overline{\alpha}_{kerr}(0,z)-2\overline{U}_{kerr}(0,z)
=\log\frac{m(m+m_{0})}{2m_0^{2}}\geq 0.
\end{equation}
Consider now the surface gravity of the Kerr black hole
\begin{equation}
\kappa=\frac{\sqrt{m^4-\mathcal{J}^2}}{2\left(m^3+m\sqrt{m^4-\mathcal{J}^2}\right)}.
\end{equation}
Comparing the two formulas produces
\begin{equation}
\overline{\beta}_{kerr}(0,z)=-\log (4m_0\kappa).
\end{equation}

\end{document}